\newcolumntype{Z}{>{\centering\let\newline\\\arraybackslash\hspace{0pt}}X}
\NewDocumentCommand{\busref}{som}{\texttt{#3\IfValueTF{#2}{[#2]}{}\IfBooleanTF{#1}{\#}{}}}
\algrenewcommand\algorithmicindent{1em}
\newcommand{\refalg}[1]{{Algorithm~\ref{alg:#1}\xspace}}
\newcommand{\refeq}[1]{{Equation~\ref{eq:#1}\xspace}}
\newcommand{\refex}[1]{{Example~\ref{ex:#1}\xspace}}
\newcommand{\reffig}[1]{{Fig.~\ref{fig:#1}\xspace}}
\newcommand{\reflem}[1]{{Lemma~\ref{th:#1}\xspace}}
\newcommand{\reftab}[1]{{Table~\ref{tab:#1}\xspace}}
\newcommand{\eucropis}{Eu:CROPIS\xspace}
\newcommand{\partystrategy}{\textsf{PARTYStrategy}\xspace}
\newcommand{\party}{\textsf{PARTY}\xspace}
\newcommand{\alp}{\Sigma}
\newcommand{\concat}{\circ}
\newcommand{\constr}{\Theta} 
\newcommand{\faultform}{\delta}
\newcommand{\faultfreqset}{\mathsf{Frq}}
\newcommand{\faultfreq}{\mathsf{frq}}
\newcommand{\faultkind}{\kappa}
\newcommand{\inalp}{\Sigma_I}
\newcommand{\inp}{I}
\newcommand{\intrace}{{\overline{\sigma_I}}}
\newcommand{\letter}{\sigma}
\newcommand{\mealyset}{\mathsf{Mealy}}
\newcommand{\modelcheck}{\mathsf{modelcheck}}
\newcommand{\mooreset}{\mathsf{Moore}}
\newcommand{\moore}{\mathcal{M}}
\newcommand{\outalp}{\Sigma_O}
\newcommand{\outp}{O}
\newcommand{\outtrace}{{\overline{\sigma_O}}}
\newcommand{\real}{\!\mathrel {||}\joinrel \Relbar\!}
\newcommand{\strat}{\mathcal{T}}
\newcommand{\suite}{\text{TS}}
\newcommand{\syntp}{\mathsf{synt}_p}
\newcommand{\synt}{\mathsf{synt}}
\newcommand{\sys}{\mathcal{S}}
\newcommand{\trace}{{\overline{\sigma}}}
\newcommand{\unreal}{\textsf{un\-real\-izable}\xspace}
\newcommand{\car}{\mathsf{c}}
\newcommand{\h}{\mathsf{h}}
\newcommand{\f}{\mathsf{f}}
\newcommand{\p}{\mathsf{p}}
\newcommand{\nommode}{\text{\tt mode$_{\text{\tt 1}}$}\xspace}
\newcommand{\redmode}{\text{\tt mode$_{\text{\tt 2}}$}\xspace}
\newcommand{\normerr}{\text{\tt err$_{\text{\tt nc}}$}\xspace}
\newcommand{\criterr}{\text{\tt err$_{\text{\tt s}}$}\xspace}
\newcommand{\reset}{\text{\tt reset}\xspace}
\newcommand{\safemode}{\text{\tt safemode}\xspace}
\newcommand{\redon}{\text{\tt on$_{\text{\tt 2}}$}\xspace}
\newcommand{\nomon}{\text{\tt on$_{\text{\tt 1}}$}\xspace}
\newcommand{\redoff}{\text{\tt off$_{\text{\tt 2}}$}\xspace}
\newcommand{\nomoff}{\text{\tt off$_{\text{\tt 1}}$}\xspace}
\newcommand{\lastupisnom}{\text{\tt lastup}\xspace}
\newcommand{\allowswitch}{\text{\tt allowswitch}\xspace}
\begin{document}

\title{Synthesizing Adaptive Test Strategies from Temporal Logic Specifications}
\author{Roderick Bloem$^{1}$ \and Goerschwin Fey$^{2,3}$ \and Fabian Greif$^{3}$ \and Robert K\"{o}nighofer$^{1}$ \and\\%
        Ingo Pill$^{1}$ \and Heinz Riener$^{3,4}$ \and Franz R\"{o}ck$^{1}$}
\institute{%
  $^1$Graz University of Technology, Graz, Austria \\
  $^2$Hamburg University of Technology, Hamburg, Germany \\
  $^3$German Aerospace Center, Bremen, Germany \\
  $^4$EPFL, Lausanne, Switzerland
}

\maketitle

\begin{abstract}
Constructing good test cases is difficult and time-consuming, especially if the system under test is still under development and its exact behavior is not yet fixed.  We propose a new approach to compute  test strategies for reactive systems from a given temporal logic specification using formal methods.  The computed strategies are guaranteed to reveal certain simple faults  in \emph{every} realization of the specification and for \emph{every} behavior of the uncontrollable part of the system's environment.  The proposed approach supports different assumptions on occurrences of faults (ranging from a single transient fault to a persistent fault) and by default aims at unveiling the weakest one.  Based on well-established hypotheses from fault-based testing, we argue that such tests are also sensitive for more complex bugs.  Since the specification may not define the system behavior completely, we use reactive synthesis algorithms with partial information. The computed strategies are \emph{adaptive test strategies} that react to behavior at runtime.  We work out the underlying theory of adaptive test strategy synthesis and present experiments for a safety-critical component of a real-world satellite system.  We demonstrate that our approach can be applied to industrial specifications and that the synthesized test strategies are capable of detecting bugs that are hard to detect with random testing.
\end{abstract}

\section{Introduction}

Model checking~\cite{ClarkeE81,QueilleS82} is an algorithmic approach to prove that a model of a system adheres to its specification.  However, model checking cannot always be applied effectively to obtain confidence in the correctness of a system.  Possible reasons include scalability issues, third-party IP components for which no code or detailed model is available, or a high effort for building system models that are sufficiently precise. Moreover, model checking cannot verify the final and ``live'' product but only an (abstracted) model.

Testing is a natural alternative to complement formal methods like model checking, and automatic test case generation helps keeping the effort acceptable.  Black-box testing techniques, where tests are derived from a specification rather than the implementation, are particularly attractive: first, tests can be computed before the implementation phase starts, and thus guide the development.  Second, the same tests can be reused across different realizations of a given specification.  Third, a specification is usually much simpler than its implementation, which gives a scalability advantage.  At the same time, the specification focuses on critical functional aspects that require thorough testing.  Fault-based techniques~\cite{JiaH11} are particularly appealing, where the computed tests are guaranteed to reveal all faults in a certain fault class --- after all, the foremost goal in testing is to detect bugs.

Methods to derive tests from declarative requirements (see, e.g.,~\cite{FraserWA09}) are sparse.  One issue in this setting is controllability: the requirements leave plenty of implementation freedom, so they cannot be used to fully predict the system behavior for all given inputs.  Consequently, test cases have to be \emph{adaptive}, i.e., able to react to observed behavior at runtime, rather than being fixed input sequences.  This is particularly true for \emph{reactive systems} that continuously interact with their environment.  Existing methods often work around this complication by requiring a deterministic system model as additional input~\cite{FraserWA09b}.  Even a probabilistic model fixes the behavior in a way not necessarily required by the specification.

In previous work, we presented a fault-based approach to compute adaptive test strategies for reactive systems~\cite{BloemKPR16}.  This approach generates tests that enforce certain coverage goals for \emph{every} implementation of a provided specification.  The generated tests can be used across realizations of the specification that differ not only in implementation details but also in their observable behavior.  This is, e.g., useful for standards and protocols that are implemented by multiple vendors or for systems under development, where the exact behavior is not yet fixed.

\reffig{testing setup} outlines the assumed testing setup and shows how the approach for synthesizing adaptive test strategies (illustrated in black) can be integrated in an existing testing flow.  The user provides a specification $\varphi$, which describes the requirements of the system under test (SUT) and additionally a fault model~$\faultform$, which defines the coverage goal in terms of a class of faults for which the tests shall cause a specification violation.  Both the specification and the coverage goal are expressed in Linear Temporal Logic (LTL)~\cite{Pnueli77}.  By default, our approach supports the detection of transient and permanent faults and distinguishes four fault occurrence frequencies: faults that occur at least (1) once, (2) repeatedly, (3) from some point on, or (4) permanently.  The approach then automatically synthesizes a test strategy to reveal a fault for the lowest frequency possible.  Such a test strategy guarantees to cause a specification violation if the fault occurs with the defined fault occurrence (and all higher fault occurrence frequencies) and the test is executed long enough.  Besides the four default fault occurrence frequencies, a user can also provide a custom frequency using LTL.

Under the hood, reactive synthesis~\cite{PnueliR89} with partial information~\cite{KupfermanV00} is used, which provides strong guarantees about all uncertainties: if synthesis is successful and if the computed tests are executed long enough, they reveal all faults from the fault model for every realization of the specification and every behavior of the uncontrollable part of the system's environment.  Uncontrollable environment aspects can be seen as part of the system for the purpose of testing.  Finally, existing techniques from runtime verification~\cite{BauerLS11} can be used to build an oracle that checks the system behavior against the specification while tests are executed.\footnote{While the semantics of LTL are defined over infinite execution traces, we can only run the tests for a finite amount of time.  This can result in inconclusive verdicts~\cite{BauerLS11}.  We exclude this issue from the scope of this paper, relying on the user to judge when tests have been executed long enough, and on existing research on interpreting LTL over finite traces~\cite{MorgensternGS12,HavelundR01,GiacomoV13,GiacomoMM14}.}

\begin{figure}[t]
  \centering
  \includegraphics[width=1.0\textwidth]{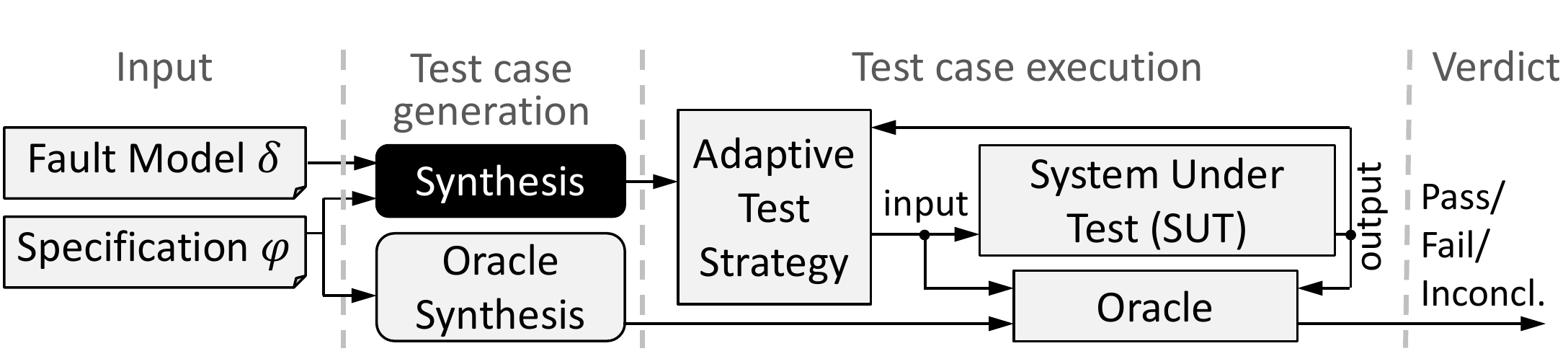}
  \caption{Testing setup: this paper focuses on test strategy synthesis.}
  \label{fig:testing setup}
\end{figure}

This paper is an extension of~\cite{BloemKPR16}.  In summary, this paper presents the following contributions:
\begin{itemize}
\item An approach to compute adaptive test strategies for reactive systems from temporal specifications that provide implementation freedom.  The tests are guaranteed to reveal certain bugs for \emph{every} realization of the specification.
\item The underlying theory is considered in detail, i.e.,  we show that the approach is sound and complete for many interesting cases and provide additional solutions for other cases that may arise in practice.
\item A proof of concept tool, called \partystrategy\footnote{\partystrategy, https://www.iaik.tugraz.at/content/research/scos/tools/},  that is capable of generating multiple different test strategies, implemented on top of the synthesis tool \party~\cite{KhalimovJB13}.
\item A post-processing procedure to generalize a test strategy by eliminating input constraints not necessary to guarantee a coverage goal.
\item A case study with a safety-critical software component of a real-world satellite system developed in the German Aerospace Center (DLR).  We specify the system in LTL, synthesize test strategies, and evaluate the generated adaptive test strategies using code coverage and mutation coverage metrics. Our synthesized test strategies increase both the mutation coverage as well as the code coverage of random test cases  by activating behaviors that require complex input sequences that  are unlikely to be produced by random testing.
\end{itemize}

The remainder of this paper is organized as follows: Section~\ref{sec:ex} illustrates our approach and presents a motivating example.  Section~\ref{sec:rel_work} discusses related work.  Section~\ref{sec:prelim} gives preliminaries and notation.  Our test case generation approach is then worked out in detail in Section~\ref{sec:synt}.  Section~\ref{sec:experimentals} presents the case study and discusses results.  Section~\ref{sec:concl} concludes.

\section{Motivating Example}
\label{sec:ex}

\begin{figure}[t]
  \centering
  \includegraphics[width=.5\textwidth]{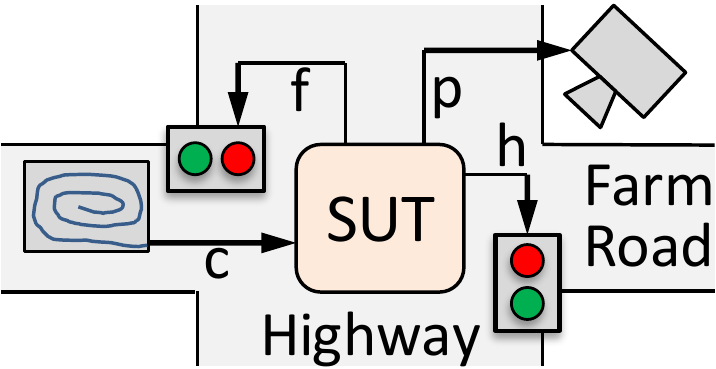}
  \caption{Traffic light example.}
  \label{fig:traffic lights}
\end{figure}

Let us develop a traffic light controller for the scenario depicted in  \reffig{traffic lights}. For this highway and farmroad crossing, the controller's  Boolean input signal $c$ describes whether a car is idling at the farmroad.  Boolean outputs $h$ and $f$ control the highway and farmroad traffic lights  respectively, where a value of $\true$ means a green light. Output $p$ controls  a camera that takes a picture if a car on the farmroad makes a fast start, i.e.,  races off immediately when the farmroad light turns green. The controller then should implement the following critical properties:
\begin{enumerate}
\item The traffic lights must never be green simultaneously.\label{item:p1}
\item If a car is waiting at the farmroad, $f$ eventually turns $\true$.\label{item:p2}
\item If no car is waiting at the farmroad, $h$ eventually becomes $\true$.\label{item:p3}
\item A picture is taken if a car on the farmroad makes a fast start.\label{item:p4}
\end{enumerate}

\bigskip

We model the four properties in Linear Temporal Logic (LTL)~\cite{Pnueli77} as
\begin{align}
\varphi_1 &= \always(\neg \f \vee \neg \h) \\
\varphi_2 &=\always(\car \rightarrow \eventually \f) \\
\varphi_3 &=\always(\neg \car \rightarrow \eventually \h) \\
\varphi_4 &=\always\bigl((\neg \f \wedge \nextt(\car \wedge \f \wedge \nextt \neg \car)) \leftrightarrow \nextt \nextt \p\bigr)
\end{align}
where the operator $\always$ denotes \emph{always}, $\eventually$ denotes \emph{eventually}, and $\nextt$ denotes \emph{in the nextstep}.

The resulting specification is then:
\begin{align*}
\varphi = \varphi_1 \land \varphi_2 \land \varphi_3 \land \varphi_4
\end{align*}

\vspace{1mm}
To compute a test strategy (only from the specification) that enforces a specification violation by the system under the existence of a certain fault (or class of faults), we have some requirements for our approach.

\paragraph{Enforcing test objectives}
To mitigate scalability issues, we compute test cases directly from the specification $\varphi$.  Note that $\varphi$ focuses on the desired properties only, and allows for plenty of implementation freedom.  Our goal is to compute tests that \emph{enforce} certain coverage objectives \emph{independent} of this implementation freedom.  Some uncertainties about the SUT behavior may actually be rooted in uncontrollable environment aspects (such as weather conditions) rather than implementation freedom inside the system.  But for our testing approach, this makes no difference.  We can force the farmroad's traffic light to turn green ($\f$=$\true$) by relying on a correct implementation of Property~\ref{item:p2} and setting $\car$=$\true$. Depending on how the system is implemented, $\f$=$\true$ might also be achieved by setting $\car$=$\false$ all the time, but this is not guaranteed.

\begin{figure}[t]
  \centering
  \subfloat{\begin{tikzpicture}[>=latex,->,auto,initial text={},initial distance=3mm]
\node[state,initial,inner sep=0]  at  (0,0)       (S0) {$\neg \car$};
\node[state,inner sep=0]          at  (1.5,0)     (S1) {$\car$};
\node[state,inner sep=0]          at  (3.0,0)     (S2) {$\neg \car$};

\path
(S0) edge [loop above]  node[xshift=-3mm,yshift=-4mm] {$\f$} (S0)
(S0) edge [bend left]  node[xshift=0mm,yshift=-5mm] {$\neg \f$} (S1)
(S1) edge [loop above]  node[xshift=-4mm,yshift=-4mm] {$\neg \f$} (S1)
(S1) edge [bend left]  node[xshift=0mm,yshift=-5mm] {$\f$} (S2)
(S2) edge [loop above]  node[xshift=-5mm,yshift=-4mm] {$\true$} (S2)
;
\end{tikzpicture}}%
  \qquad\qquad
  \subfloat{\begin{tikzpicture}[>=latex,->,auto,initial text={},initial distance=3mm]
\node[state,initial,inner sep=0]  at  (0,0)       (S0) {$\neg \car$};
\node[state,inner sep=0]          at  (1.5,0)     (S1) {$\car$};

\path
(S0) edge [loop above]  node[xshift=-3mm,yshift=-3mm] {$\f$} (S0)
(S1) edge [loop above]  node[xshift=-3mm,yshift=-3mm] {$\neg \f$} (S1)
(S0) edge [bend left]   node[xshift=0mm,yshift=-0.5mm] {$\neg \f$} (S1)
(S1) edge [bend left,line width=1.2pt] node[xshift=0mm,yshift=4.5mm] {$\f$} (S0)
;
\end{tikzpicture}}%
  \caption{Two adaptive test strategies for the traffic light controller: on the left, $\strat_1$ that enforces $\p$ = $\true$ once.  On the right, $\strat_2$ that enforces $\p$ = $\true$ infinitely often.}%
  \label{fig:motivating example1}%
\end{figure}
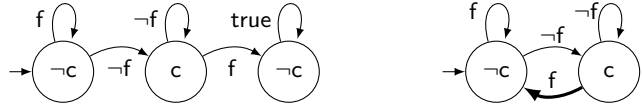

\paragraph{Adaptive test strategies}
Certain test goals may not be enforceable with a static input sequence.  For our example, for $\p$ to be $\true$, a car must do a fast start.  Yet, the specification does not prescribe the exact point in time when the traffic light turns to green.  We thus synthesize \emph{adaptive} test strategies that guide the controller's inputs based on the previous inputs and outputs and, therefore, can take advantage of situational possibilities by exploiting previous system behavior.

\reffig{motivating example1} shows a test strategy $\strat_1$ (on the left) to reach $\p$=$\true$, illustrated as a state machine.  States are labeled by the value of controller \emph{input}~$\car$ (which is an  \emph{output of the test strategy $\strat_1$}).  Edges represent  transitions and are labeled with  conditions on observed output values (since the SUT's outputs are inputs for the  test strategy).  First, $\car$ is set to $\false$ to provoke  $\h$=$\true$ via Property~\ref{item:p3}, implying $\f$=$\false$ via Property~\ref{item:p1}.  As soon as this happens, the strategy traverses to the  middle state, setting $\car$=$\true$ in order to have $\f$=$\true$ eventually (Property~\ref{item:p2}).  As soon as $\f$ switches from $\false$ to $\true$, $\strat_1$ sets $\car$=$\false$ in the rightmost state to trigger a picture (Property~\ref{item:p4}). A system with a permanent stuck-at-0 fault at signal $\p$ is unable to satisfy the specification and the resulting violation can be detected by a runtime verification technique.

\paragraph{Coverage objectives}
We follow a fault-centered approach to define the test objectives to enforce.  The user defines a class of (potentially transient) faults.  Our approach then computes adaptive test strategies (in form of state machines) that detect these faults.  For a permanent stuck-at-$0$ fault at signal~$\p$, our approach could produce the test strategy $\strat_1$ from the previous paragraph: for any correct implementation of $\varphi$, the strategy enforces~$\p$ becoming $\true$ at least once.  Thus, a faulty version where $\p$ is always $\false$ necessarily violates the specification, which can be detected~\cite{BauerLS11} during test strategy execution.  The test strategy $\strat_2$, as shown on the right of \reffig{motivating example1}, is even more powerful since it also reveals stuck-at-$0$ faults for $\p$ that occur not always but only from some point in time onwards.  The difference to $\strat_1$ is mainly in the bold transition, which makes $\strat_2$ enforce $\p$=$\true$ infinitely often rather than only once. Our approach distinguishes four fault occurrence frequencies (a fault occurs at least once, infinitely often, from some point on, or always) and synthesizes test strategies for the lowest one for which this is possible.

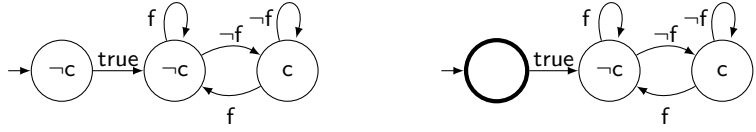
\begin{figure}[t]
  \centering
  \subfloat{\begin{tikzpicture}[>=latex,->,auto,initial text={},initial distance=3mm,node distance=1.5cm]
\node[state,initial,inner sep=0]     at  (0,0) (S0) {$\neg \car$};
\node[state,inner sep=0,right of=S0]           (S1) {$\neg \car$};
\node[state,inner sep=0,right of=S1]           (S2) {$\car$};

\path
(S1) edge [loop above]  node[xshift=-3mm,  yshift=-4mm] {$\f$}      (S1)
(S2) edge [loop above]  node[xshift=-3.5mm,yshift=-4mm] {$\neg \f$} (S2)
(S0) edge []            node[xshift=0mm,yshift=-0.5mm]  {$\true$}       (S1)
(S1) edge [bend left]   node[xshift=0mm,yshift=-0.5mm]  {$\neg \f$} (S2)
(S2) edge [bend left]   node[xshift=0mm,yshift=-0.5mm]  {$\f$}      (S1)
;
\end{tikzpicture}}%
  \qquad\qquad
  \subfloat{\begin{tikzpicture}[>=latex,->,auto,initial text={},initial distance=3mm,node distance=1.5cm]
\node[ultra thick,state,initial,inner sep=0]  at (0,0) (S0) {};
\node[state,inner sep=0,right of=S0]                   (S1) {$\neg \car$};
\node[state,inner sep=0,right of=S1]                   (S2) {$\car$};

\path
(S1) edge [loop above] node[xshift=-3mm,yshift=-4mm]    {$\f$}      (S1)
(S2) edge [loop above] node[xshift=-3.5mm,yshift=-4mm]  {$\neg \f$} (S2)
(S0) edge []           node[xshift=0mm,yshift=-0.5mm]   {$\true$}       (S1)
(S1) edge [bend left]  node[xshift=0mm,yshift=-0.5mm]   {$\neg \f$} (S2)
(S2) edge [bend left]  node[xshift=0mm,yshift=-0.5mm]   {$\f$}      (S1)
;
\end{tikzpicture}}%
  \caption{Two more adaptive test strategies for the traffic light controller: on the left, $\strat_3$ that enforces $\p$=$\true$ infinitely often starting from the second time step.  On the right, $\strat_4$ that generalizes $\strat_3$ by allowing an arbitrary choice for the input in the first time step.}%
  \label{fig:motivating example2}%
\end{figure}

\paragraph{Multiple strategies}
The previously discussed strategies, $\strat_1$ and $\strat_2$, reveal a stuck-at-$0$ fault that manifests permanently at signal $\p$ or a stuck-at-$0$ fault that manifests from some point in time on permanently at signal $\p$, respectively.  Let us now assume that a stuck-at-$0$ fault occurs from some point in time only if a certain input-output interaction happened first, e.g., if $\car$ is $\false$ at the second time step.  Strategy~$\strat_3$ as shown on the left of \reffig{motivating example2} sets $\car$=$\false$ in the second time step.  The output produced by the SUT as a response is not relevant.  The strategy then follows $\strat_2$ to enforce $\p$=$\true$ infinitely often, as before. The two test strategies, $\strat_2$ and $\strat_3$, enforce the same test objective; however when executed they produce different traces.  We argue that considering multiple test strategies for a test objective is necessary to uncover faults in different system implementations and extend our approach to compute a bounded number of test strategies for a given test objective to improve the overall fault coverage while keeping the computational overhead controllable by the user.

\paragraph{Strategy generalization}
The assignment $\car=\false$ in the initial state of $\strat_3$ is neither necessary to activate the fault in the envisioned scenario nor to enforce $\p$=$\true$ infinitely often.  From a testing perspective, the tester is free to make an arbitrary choice for the input to the SUT in the initial state.  As a generalization mechanism of the test strategies, we identify and remove state machine labels not necessary to enforce the test objective.  Strategy~$\strat_4$, illustrated on the right of \reffig{motivating example2}, is similar to $\strat_3$, but differs by only having assignments for input variables in states where the concrete values are necessary to enforce the desired behavior.

\section{Background and Related Work}
\label{sec:rel_work}

\paragraph{Fault-based testing}
Fault-based test case generation methods that use the concept of mutation testing~\cite{JiaH11} seed simple faults into a system implementation (or model) and compute tests that uncover these faults.  Two hypotheses support the value of such tests.  The Competent Programmer Hypothesis~\cite{DeMilloLS78,AcreeBDLS79} states that implementations are mostly close to correct.  The Coupling Effect~\cite{DeMilloLS78,Offutt92} states that tests that detect simple faults are also sensitive to more complex faults.  Our approach also relies on these hypotheses.  However, in contrast to most existing work that considers permanent faults and deterministic system descriptions that define behavior unambiguously, our approach can deal with transient faults and focuses on uncovering faults in \emph{every} implementation of a given LTL~\cite{Pnueli77} specification (and all behaviors of the uncontrollable part of the system's environment).

\paragraph{Adaptive tests}
If the behavior of the system or the uncontrollable part of the environment is not fully specified, tests may have to react to observed behavior at runtime to achieve their goals.  Such adaptive tests have been studied by Hierons~\cite{Hierons06} from a theoretical perspective, relying on fairness assumptions (every non-deterministic behavior is exhibited when trying often enough) or probabilities. Petrenko et al.~compute adaptive tests for trace inclusion \cite{PetrenkoS15,PetrenkoY14,PetrenkoY05} or equivalence~\cite{PetrenkoSY12,LuoBP94,PetrenkoY05} from a specification given as non-deterministic finite state machine, also relying on fairness assumptions.  Our work makes no such assumptions but considers the SUT to be fully antagonistic.  Aichernig et al.~\cite{AichernigBJKST15} present a method to compute adaptive tests from (non-deterministic) UML state machines.  Starting from an initial state, a trace to a goal state, the state that shall be covered by the resulting test case, is searched for every possible system behavior, issuing inconclusive verdicts only if the goal state is not reachable any more.  Our approach uses reactive synthesis to enforce reaching the testing goal for all implementations if this is possible.

\paragraph{Testing as a game}
Yannakakis~\cite{Yannakakis04} points out that  testing reactive systems can be seen as a game between two players: the tester  providing inputs and trying to reveal faults, and the SUT providing outputs and  trying to hide faults.  The tester can only observe outputs and has thus partial  information about the SUT.  The goal is to find a strategy for the tester that  wins against every SUT.  The underlying complexities are studied by Alur et al.~\cite{AlurCY95} in detail.  Our work builds upon reactive synthesis~\cite{PnueliR89}  (with partial information~\cite{KupfermanV00}), which can also be seen as a game.   However, we go far beyond the basic idea.  We combine the game concept with  user-defined fault models, work out the underlying theory, optimize the faults  sensitivity in the temporal domain, and present a realization and experiments  for LTL~\cite{Pnueli77}.  Nachmanson et al.~\cite{NachmansonVSTG04} synthesize  game strategies as tests for non-deterministic software models, but  their approach is not fault-based and focuses on simple reachability goals. A variant of their approach considers the SUT to behave probabilistically with known  probabilities~\cite{NachmansonVSTG04}.  The same model is also used  in~\cite{BlassGNV05}.  Test strategies for reachability goals are also considered by David et al.~\cite{DavidLLN08} for timed automata.

\paragraph{Vacuity detection}
Several approaches~\cite{BeerBER01,KupfermanV03,ArmoniFFGPTV03} aim at finding cases where a temporal specification is trivially satisfied (e.g., because the left side of an implication is false).  Good tests avoid such vacuities to challenge the SUT.  The method by Beer et al.~\cite{BeerBER01} can produce  witnesses that satisfy the specification non-vacuously, which can serve as  tests. Our approach avoids vacuities by requiring that certain faulty SUTs violate the specification.

\paragraph{Testing with a model checker}
Model checkers can be utilized to compute tests from temporal specifications~\cite{FraserWA09}.  The method by Fraser and Ammann~\cite{FraserA08} ensures that properties are not vacuously satisfied and that faults propagate to observable property violations (using finite-trace semantics for LTL).  Tan et al.~\cite{TanSL04} also define and apply a coverage  metric based on vacuity for LTL.  Ammann et al.~\cite{AmmannDX01} create tests  from CTL~\cite{ClarkeE81} specifications using model mutations.  All these  methods assume that a deterministic system model is available in addition to the  specification.  Fraser and Wotawa~\cite{FraserW07} also consider  non-deterministic models, but issue inconclusive verdicts if the system  deviates from the behavior foreseen in the test case.  In contrast, we search  for test strategies that achieve their goal for \emph{every} realization of the  specification.   Boroday et al.~\cite{BorodayPG07} aim for a similar guarantee (calling it  \emph{strong test cases}) using a model checker, but do not consider adaptive  test cases, and use a finite state machine as a specification. 

\paragraph{Synthesis of test strategies}
Bounded synthesis~\cite{FinkbeinerS13} aims for finding a system implementation of minimal size in the number of states.  Symbolic procedures based on binary decision diagrams~\cite{Ehlers12} and satisfiability solving~\cite{KhalimovJB13} exist.  In our setting, we do not synthesize an implementation of the system, but an adaptive test strategy, i.e., a controller that mimics the system's environment to enforce a certain test goal.  In contrast to a complete implementation of the controller, we strive for finding a partial implementation that assigns values only to those signals that necessarily contribute to reach the test goal.  Other signals can be kept non-deterministic and either chosen during execution of the test strategy or randomized.  We use a post-processing procedure that eliminates assignments from the test strategy and invokes a modelchecker to verify that the test goal is still enforced.  This post-processing step is conceptually similar to procedures that aim for counterexample simplification~\cite{JinRS04} and don't care identification in test patterns~\cite{MiyaseK04}.  Jin et al.~\cite{JinRS04} separate a counterexample trace into forced segments that unavoidably progress towards the specification violation and free segments that, if avoided, may have prevented the specification violation.  Our post-processing step is similar, but instead of counterexamples, adaptive test strategies are post-processed.  Miyase and Kajihara~\cite{MiyaseK04} present an approach to identify don't cares in test patterns of combinational circuits.  In contrast to combinational circuits, we deal with reactive systems.  Instead of post-processing a complete test strategy, a partial test strategy can be directly synthesized by modifying a synthesis procedure to compute minimum satisfying assignments~\cite{DilligDMA12}.  Although feasible, modifying a synthesis procedure requires a lot of work.  Our post-processing procedure uses the synthesis procedure in a plug-and-play fashion and does not require manual changes in the synthesis procedure.

\section{Preliminaries and Notation}
\label{sec:prelim}

\paragraph{Traces}
We want to test reactive systems that have a finite set $\inp=\{i_1,\ldots,i_m\}$ of Boolean inputs and a finite set $\outp=\{o_1,\ldots,o_n\}$ of Boolean outputs.  The input alphabet is $\inalp=2^\inp$, the output alphabet is $\outalp=2^\outp$, and $\alp=2^{\inp  \cup \outp}$.  An infinite word $\trace$ over $\alp$ is an \emph{(execution) trace} and the set $\alp^\omega$ is the set of all infinite words over $\alp$.

\paragraph{Linear Temporal Logic}
We use \emph{Linear Temporal Logic~(LTL)}~\cite{Pnueli77} as a specification language for reactive systems.  The syntax is defined as follows: every input or output $p\in\inp\cup \outp$ is an LTL formula; and if $\varphi_1$ and $\varphi_2$ are LTL formulas, then so are $\neg \varphi_1$, $\varphi_1 \vee \varphi_2$, $\nextt \varphi_1$ and $\varphi_1 \until \varphi_2$.  We write $\trace \models \varphi$ to denote that a trace $\trace = \letter_0 \letter_1 \ldots \in \alp^\omega$ \emph{satisfies} LTL formula $\varphi$.  This is defined inductively as follows:
\begin{compactitem}
\item $\letter_0 \letter_1 \letter_2 \ldots \models p$ iff $p \in \letter_0$,
\item $\trace \models \neg \varphi$ iff $\trace \not\models \varphi$,
\item $\trace \models \varphi_1 \vee \varphi_2$ iff $\trace \models \varphi_1$ or $\trace \models \varphi_2$,
\item $\letter_0 \letter_1 \letter_2 \ldots \models \nextt \varphi$ iff $\letter_1 \letter_2 \ldots \models \varphi$, and
\item $\letter_0 \letter_1  \ldots \models \varphi_1 \until \varphi_2$ iff $\exists j \ge 0 \scope \letter_j \letter_{j+1} \ldots \models \varphi_2 \wedge \forall 0 \leq k < j \scope \letter_k \letter_{k+1} \ldots \models \varphi_1$.
\end{compactitem}
That is, $\nextt \varphi$ requires $\varphi$ to hold in the \emph{next} step, and $\varphi_1 \until \varphi_2$ means that $\varphi_1$ must hold \emph{until} $\varphi_2$ holds (and $\varphi_2$ must hold eventually).  We also use the usual abbreviations $\varphi_1 \wedge \varphi_2 = \neg (\neg \varphi_1 \vee \neg \varphi_2)$, $\varphi_1 \rightarrow \varphi_2 = \neg \varphi_1 \vee \varphi_2$, $\eventually \varphi = \true \until \varphi$ (meaning that $\varphi$ must hold \emph{eventually}), and $\always \varphi = \neg \eventually \neg \varphi$ ($\varphi$ must hold \emph{always}).  By $\varphi[x \leftarrow y]$ we denote the LTL formula $\varphi$ where all occurrences of~$x$ have been textually replaced by~$y$.

\paragraph{Mealy machines}
We use Mealy machines to model the reactive system under test.  A \emph{Mealy machine} is a tuple $\sys = (Q, q_0, \inalp, \outalp, \delta, \lambda)$, where $Q$ is a finite set of states, $q_0\in Q$ is the initial state, $\delta: Q \times \inalp \rightarrow Q$ is a total transition function, and $\lambda: Q \times \inalp \rightarrow \outalp$ is a total output function.  Given the input trace $\intrace = x_0 x_1 \ldots \in \inalp^\omega$, $\sys$ produces the output trace $\outtrace = \sys(\intrace) = \lambda(q_0, x_0) \lambda(q_1, x_1) \ldots \in \outalp^\omega$, where $q_{i+1} = \delta(q_i, x_i)$ for all $i \ge 0$. That is, in every time step $i$, the Mealy machine reads the input letter $x_i\in \inalp$, responds with an output letter $\lambda(q_i, x_i) \in \outalp$, and updates its state to $q_{i+1} = \delta(q_i, x_i)$.  A Mealy machine can directly model synchronous hardware designs, but also other systems with inputs and outputs evolving in discrete time steps.  We write $\mealyset(\inp,\outp)$ for the set of all Mealy machines with inputs $\inp$ and outputs $\outp$.

\paragraph{Moore machines}
We use Moore machines to describe test strategies.  A \emph{Moore machine} is a special Mealy machine with $\forall q\in Q\scope \forall x,x'\in \inalp \scope \lambda(q,x) = \lambda(q,x')$.  That is, $\lambda(q,x)$ is insensitive to $x$, i.e., becomes a function $\lambda: Q \rightarrow \outalp$. This means that the input $x_i$ at step $i$ can affect the next state $q_{i+1}$ and thus the next output $\lambda(q_{i+1})$ but not the current output $\lambda(q_i)$.  We write $\mooreset(\inp,\outp)$ for the set of all Moore machines with inputs $\inp$ and outputs $\outp$.

\paragraph{Composition}
Given Mealy machines $\sys_1 = (Q_1, q_{0,1}, 2^\inp, 2^{\outp_1}, \delta_1, \lambda_1) \in \mealyset(\inp,\outp_1)$ and $\sys_2 = (Q_2, q_{0,2}, 2^{\inp\cup\outp_1}, 2^{\outp_2}, \delta_2, \lambda_2) \in \mealyset(\inp\cup\outp_1, \outp_2)$, we write $\sys = \sys_1 \concat \sys_2$ for their sequential composition $\sys = (Q_1 \times Q_2, (q_{0,1}, q_{0,2}), 2^\inp, 2^{\outp_1 \cup \outp_2},$ $ \delta, \lambda)$, where $\sys \in \mealyset(\inp,\outp_1\cup \outp_2)$ with $\delta\bigl((q_1, q_2), x\bigr) = \bigl(\delta_1(q_1,x), \delta_2(q_2, x \cup \lambda_1(q_1,x))\bigr)$ and $\lambda\bigl((q_1, q_2), x\bigr) = \lambda_1(q_1,x) \cup \lambda_2\bigl(q_2,x \cup \lambda_1(q_1,x)\bigr)$.  Note that $x \in 2^\inp$.

\paragraph{Systems and test strategies}
A \emph{reactive system} $\sys$ is a Mealy machine.  An \emph{(adaptive) test strategy} is a Moore machine $\strat = (T, t_0, \outalp, \inalp, \Delta, \Lambda)$ with input and output alphabet swapped.  That is, $\strat$ produces values for input signals and reacts to values of output signals.  A test strategy $\strat$ can be \emph{run} on a system $\sys$ as follows.  In every time step $i$ (starting with $i=0$), $\strat$ first computes the next input $x_i=\Lambda(t_i)$.  Then, the system computes the output $y_i = \lambda(q_i, x_i)$.  Finally, both machines compute their next state $t_{i+1} = \Delta(t_i, y_i)$ and $q_{i+1} = \delta(q_i, x_i)$.  We write $\trace(\strat,\sys) = (x_0 \cup y_0) (x_1 \cup y_1) \ldots \in \alp^\omega$ for the resulting execution trace.  If $\strat = (T, t_0, 2^{\outp'}, \inalp, \Delta, \Lambda) \in \mooreset(\outp', \inp)$ can observe only a subset $\outp'\subseteq \outp$ of the outputs, we define $\trace(\strat,\sys)$ with $t_{i+1} = \Delta(t_i, y_i \cap \outp')$. A \emph{test suite} is a set $\suite \subseteq \mooreset(\outp,\inp)$ of adaptive test strategies.

\paragraph{Realizability}
A Mealy machine $\sys \in \mealyset(\inp,\outp)$ \emph{realizes} an LTL formula $\varphi$, written $\sys \real \varphi$, if $\forall \moore \in \mooreset(\outp, \inp) \scope \trace(\moore,\sys) \models \varphi$.  An LTL formula $\varphi$ is \emph{Mealy-realizable} if there exists a Mealy machine that realizes it.  A Moore machine $\moore \in \mooreset(\inp, \outp)$ realizes $\varphi$, written $\moore\real\varphi$, if $\forall \sys \in \mealyset(\outp, \inp) \scope \trace(\moore,\sys) \models \varphi$.  A \emph{model checking procedure} checks if a given Mealy (Moore) machine $\sys$ ($\moore$) realizes an LTL specification $\varphi$ and returns $\true$ iff $\sys \real \varphi$ ($\moore \real \varphi$) holds.  We denote the call of a model checking procedure by $\modelcheck\bigl(\sys,\varphi\bigr)$ ($\modelcheck\bigl(\moore,\varphi\bigr)$).

\paragraph{Reactive synthesis}
We use reactive synthesis to compute test strategies.  A \emph{reactive (Moore, LTL) synthesis procedure} takes as input a set $\inp$ of Boolean inputs, a set $\outp$ of Boolean outputs, and an LTL specification $\varphi$ over these signals.  It produces a Moore machine $\moore \in \mooreset(\inp, \outp)$ that realizes $\varphi$, or the message \unreal if no such Moore machine exists.  We denote this computation by $\moore = \synt(\inp, \outp, \varphi)$.  A \emph{synthesis procedure with partial information} is defined similarly, but takes a subset $\inp' \subseteq \inp$ of the inputs as an additional argument.  As output, the synthesis procedure produces a Moore machine $\moore' = \syntp(\inp, \outp, \varphi, \inp')$ with $\moore' \in \mooreset(\inp', \outp)$ that realizes $\varphi$ while only observing the inputs $\inp'$, or the message 
\unreal if no such Moore machine exists.  We assume that both synthesis procedure, $\synt$ and $\syntp$, can be called \emph{incrementally} with an additional parameter $\constr$, where $\constr$ denotes a set of Moore machines.  The incremental synthesis procedures $\moore = \synt(\inp, \outp, \varphi, \constr)$ and $\moore' = \syntp(\inp, \outp, \varphi, \inp', \constr)$ compute Moore machines $\moore$ and $\moore^\prime$, respectively, as before but with the additional constraints that $\moore, \moore^\prime \not\in \constr$.

\paragraph{Fault versus failure}
A Mealy machine $\sys \in \mealyset(\inp,\outp)$ is \emph{faulty} with respect to LTL formula $\varphi$ (specification) iff $\sys \not\real \varphi$, i.e., $\exists \moore \in \mooreset(\outp, \inp) \scope \trace(\moore,\sys) \not\models \varphi$.  We call a trace $\trace(\moore,\sys)$ that uncovers a faulty behavior of $\sys$ a \emph{failure} and a deviation between $\sys$ and any correct realization $\sys^\prime$, i.e., $\sys^\prime \real \varphi$, a \emph{fault}.  For a fixed faulty $\sys$, there are multiple correct $\sys^\prime$ that realize $\varphi$ and thus a fault in $\sys$ can be characterized by multiple, different ways.  As a simplification, we assume that in practice every faulty $\sys$ is close to a correct $\sys^\prime$ and only deviates in a simple fault.  In the next section, we will show how this idea can be leveraged to determine test suites independent of the implementation and the concrete fault manifestation.

\section{Synthesis of Adaptive Test Strategies}
\label{sec:synt}

This section presents our approach for synthesizing adaptive test strategies for reactive systems specified in LTL.  First, we elaborate on the coverage objective we aim to achieve.  Then we present our strategy synthesis algorithm.  Finally, we discuss extensions and variants of the algorithm.

\subsection{Coverage Objective for Test Strategy Computation}
\label{sec:cov}

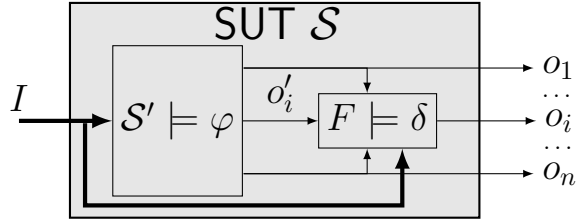
\begin{figure}[t]
  \centering
  \begin{tikzpicture}[>=latex,->]
  \node[fill=gray!20!white,draw,minimum height=2cm]   (sys)   {\Large $\sys^\prime \models \varphi$};
  \node[fill=gray!20!white,draw, right=1cm of sys]    (fault) {\Large $F \models \delta$};
  
  \draw[->]             (sys)   -- node [above] {\Large $o_i^\prime$} (fault);
  \draw[<-,ultra thick] (sys)   --++ (left:35pt) node [yshift=3pt] (i) {} --++ (left:25pt) node [above] {\Large $\inp$};

  \draw[ultra thick]    (i)     --++ (down:35pt) node {} --++ (right:120pt) node (A) {} -- (A |- fault.south);
  
  \draw[->]             (fault) --++ (right:60pt) node [right] {\Large $o_i$};

  \node (B) at ([xshift=-3pt,yshift= 20pt] sys.east) {};
  \node (C) at ([xshift=-3pt,yshift=-20pt] sys.east) {};

  \draw (B) --++ (right:113pt) node [right] (o1) {\Large $o_1$};
  \node [below=-.8pt of o1] {\dots};
  \draw (C) --++ (right:113pt) node [right] (on) {\Large $o_n$};
  \node [above=-1.5pt of on] {\dots};

  \draw (B) --++ (right:50pt) node (D) {} -- (D |- fault.north);
  \draw (C) --++ (right:50pt) node (E) {} -- (E |- fault.south);
  
  \begin{scope}[on background layer]
  \path let \p1=(sys.south), \p2=(fault.east) in node (helper) at (\x2,\y1) {};
  \draw [thick,fill=gray!20!white] ([xshift=-16pt,yshift=16pt] sys.north west) rectangle ([xshift=16pt,yshift=-8pt] helper);
  \end{scope}
  
  \node at ([yshift=37pt] $(sys)!.5!(fault)$) {\LARGE {\sf SUT} $\sys$};
\end{tikzpicture}
  \caption{Coverage goal illustration for fault.}
  \label{fig:tg}
\end{figure}

Many coverage metrics~\cite{Mathur08} exist to assess the quality of a test suite.  Since the goal in testing is to detect bugs, we follow a fault-centered approach: a test suite has high quality if it reveals certain kinds of faults in a system. As illustrated in \reffig{tg}, we assume that our SUT is ``almost correct'', i.e., it is composed of a correct implementation $\sys'$ of the specification $\varphi$, but with a fault $F$ that affects one of the outputs.  In order to make our approach flexible, we allow the user to define the considered faults as an LTL formula $\faultform$.  Through $\faultform$, the user can define both permanent and transient faults of various types.  For instance, $\faultform = \eventually(o_i \leftrightarrow \neg o_i')$ describes a bit-flip that occurs at least once, $\always\eventually \neg o_i$ models a stuck-at-0 fault that occurs infinitely often, and $\always(\nextt(o_i) \leftrightarrow o_i')$ models a permanent shift by one time step.  We strive for a test suite that reveals \emph{every} fault that satisfies $\faultform$ for \emph{every} realization of $\varphi$.  This renders the test suite independent of the implementation and the concrete fault manifestation.  The following definition formalizes this intuition into a coverage objective.

\begin{definition}\label{def:complete}
A test suite $\suite \subseteq \mooreset(\outp,\inp)$ for a system with inputs $\inp$, outputs $\outp$, and specification $\varphi$ is \emph{universally complete}\footnote{The word ``complete'' indicates that every considered fault is revealed at every output. The word ``universal'' indicates that this is achieved for every (otherwise correct) system.} with respect to a given fault model $\faultform$ iff
\begin{multline}
  \forall o_i\in \outp \scope \forall \sys' \in \mealyset(\inp,\outp \cup \{o_i'\} \setminus \{o_i\}) \scope \\
  \forall F \in \mealyset(\inp \cup \outp \cup \{o_i'\} \setminus \{o_i\}, \{o_i\}) \scope \exists \strat \in \suite \scope\\
  \Bigl(\bigl(\sys' \real \varphi[o_i\leftarrow o_i'] \wedge F \real \faultform\bigr)\rightarrow \bigl(\trace(\strat, \sys' \concat F) \not \models \varphi\bigr)\Bigr).
  \label{eq:complete_eq}
\end{multline}
\end{definition}

That is, for every output $o_i$, system $\sys'\real \varphi[o_i\leftarrow o_i']$, and fault $F \real \faultform$, $\suite$ must contain a test strategy $\strat$ that reveals the fault by causing a specification violation (\reffig{tg}).  Note that the test strategies $\strat \in \suite \subseteq \mooreset(\outp,\inp)$ cannot observe the signal $o_i'$.  The reason is that this signal $o_i'$ does not exist in the real system implementation(s) on which we run our tests --- it was only introduced to define our coverage objective.

There can be an unbounded number of system realizations $\sys' \real \varphi[o_i\leftarrow o_i']$ and faults $F \real \faultform$.  Computing a separate test strategy for each combination is thus not a viable option.  We rather strive for computing only one test strategy per output variable.
\begin{theorem}\label{th:exists}
A universally complete test suite $\suite \subseteq \mooreset(\outp,\inp)$ with respect to fault model $\faultform$ exists for a system with inputs $\inp$, outputs $\outp$, and specification $\varphi$ if
\begin{multline}
  \forall o_i\in \outp \scope \exists \strat \in \mooreset(\outp,\inp) \scope \forall \sys \in \mealyset(\inp,\outp \cup \{o_i'\}) \scope \\
  \trace(\strat, \sys) \models \bigl((\varphi[o_i\leftarrow o_i'] \wedge \faultform) \rightarrow \neg \varphi \bigr).
  \label{eq:theo_eq}
\end{multline}
\end{theorem}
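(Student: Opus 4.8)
The plan is to take the test strategies handed to us by \refeq{theo_eq} as the test suite and verify directly that it meets Definition~\ref{def:complete}. Concretely, assuming \refeq{theo_eq} holds, for every output $o_i \in \outp$ fix a witnessing Moore machine $\strat_{o_i} \in \mooreset(\outp, \inp)$, and set $\suite = \{\strat_{o_i} \mid o_i \in \outp\} \subseteq \mooreset(\outp, \inp)$ (so $\suite$ has at most $|\outp|$ elements). To show $\suite$ is universally complete, I would fix an arbitrary output $o_i$, an arbitrary system $\sys' \in \mealyset(\inp, \outp \cup \{o_i'\} \setminus \{o_i\})$ with $\sys' \real \varphi[o_i \leftarrow o_i']$, and an arbitrary fault $F \in \mealyset(\inp \cup \outp \cup \{o_i'\} \setminus \{o_i\}, \{o_i\})$ with $F \real \faultform$, and argue that $\strat_{o_i} \in \suite$ reveals the fault, i.e. $\trace(\strat_{o_i}, \sys' \concat F) \not\models \varphi$.

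The first step is type bookkeeping: by the definition of sequential composition, $\sys' \concat F \in \mealyset(\inp, \outp \cup \{o_i'\})$, so it is a legal choice for the universally quantified $\sys$ in \refeq{theo_eq}. Instantiating \refeq{theo_eq} with $\sys := \sys' \concat F$ therefore yields $\trace(\strat_{o_i}, \sys' \concat F) \models \bigl((\varphi[o_i \leftarrow o_i'] \wedge \faultform) \rightarrow \neg \varphi\bigr)$. Hence the proof reduces to establishing that the single trace $\pi := \trace(\strat_{o_i}, \sys' \concat F)$ satisfies both $\varphi[o_i \leftarrow o_i']$ and $\faultform$; the implication above then forces $\pi \models \neg \varphi$, which is exactly the conclusion needed in \refeq{complete_eq}.

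The core of the argument is a compositionality (``re-association'') claim about the run semantics, proved in each case by an induction on time steps that keeps the state components of $\strat_{o_i}$, $\sys'$, and $F$ in lockstep. To get $\pi \models \varphi[o_i \leftarrow o_i']$: group $\strat_{o_i}$ together with $F$ into one Moore environment $\moore \in \mooreset(\outp \cup \{o_i'\} \setminus \{o_i\}, \inp)$ for $\sys'$, whose state records the states of $\strat_{o_i}$ and $F$, whose output is the input letter $\strat_{o_i}$ would emit (computable from state alone, since $\strat_{o_i}$ is Moore), and whose transition on an observed output letter of $\sys'$ first simulates $F$ to recover the current value of $o_i$ and then advances $\strat_{o_i}$ on the $\outp$-part of the resulting combined letter. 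An easy induction shows that $\trace(\moore, \sys')$ and $\pi$ agree on every signal in $\inp \cup \outp \cup \{o_i'\} \setminus \{o_i\}$; since $\sys' \real \varphi[o_i \leftarrow o_i']$ we have $\trace(\moore, \sys') \models \varphi[o_i \leftarrow o_i']$, and as this formula mentions only signals on which the two traces coincide, $\pi \models \varphi[o_i \leftarrow o_i']$. Symmetrically, to get $\pi \models \faultform$: group $\strat_{o_i}$ with $\sys'$ into one Moore environment $\moore_F \in \mooreset(\{o_i\}, \inp \cup \outp \cup \{o_i'\} \setminus \{o_i\})$ for $F$, emitting the joint letter that $\strat_{o_i}$ and $\sys'$ would produce and advancing on the observed value of $o_i$; here $\trace(\moore_F, F)$ and $\pi$ agree on \emph{all} signals, so $F \real \faultform$ gives $\pi \models \faultform$. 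Combining the two satisfactions with the implication from \refeq{theo_eq} yields $\pi \models \neg \varphi$.

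I expect the only real obstacle to be making these two environment constructions precise while respecting the Moore/Mealy scheduling discipline of a run (the strategy moves, then $\sys'$, then $F$): one must check that neither $\moore$ nor $\moore_F$ introduces a cyclic dependency --- each must be able to compute its output letter from its state alone, before observing the component it is driving --- and that the partial-observation convention is threaded through correctly, namely that $\strat_{o_i}$ observes the $o_i$ produced by $F$ and never the $o_i'$ produced by $\sys'$, which it cannot see. The remaining ingredients --- the type arithmetic for $\concat$, and the standard fact that LTL satisfaction depends only on the valuations of the atomic propositions occurring in the formula --- are routine.
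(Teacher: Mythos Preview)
Your proposal is correct and follows essentially the same route as the paper. The paper's proof factors the argument through an intermediate formulation (its \refeq{rewi1}) by first swapping the $\exists\strat\,\forall\sys$ quantifiers to $\forall\sys\,\exists\strat$ and replacing the trace-level assumption $\trace(\strat,\sys)\models\varphi[o_i\leftarrow o_i']\wedge\faultform$ by the stronger realization hypothesis $\sys\real\varphi[o_i\leftarrow o_i']\wedge\faultform$, and then argues (its \refeq{rewi2}) that any pair $\sys'\real\varphi[o_i\leftarrow o_i']$, $F\real\faultform$ yields a composite $\sys=\sys'\concat F$ satisfying that hypothesis. You bypass the intermediate step and instantiate \refeq{theo_eq} directly with $\sys:=\sys'\concat F$, which is slightly more economical. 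The substantive content is the same compositionality fact --- that the realization assumptions on $\sys'$ and $F$ propagate to the single trace $\pi$ --- which the paper asserts in one line and you spell out carefully via the two auxiliary Moore environments $\moore$ and $\moore_F$; your extra detail here is welcome, since the Moore/Mealy scheduling and the partial-observation restriction on $\strat_{o_i}$ are exactly where a careless argument could go wrong.
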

\begin{proof}
\refeq{theo_eq} implies
\begin{multline}
  \forall o_i\in \outp \scope \forall \sys \in \mealyset(\inp,\outp \cup \{o_i'\}) \scope \exists \strat \in \mooreset(\outp,\inp) \scope \\
  \bigl(\sys \real \varphi[o_i\leftarrow o_i'] \land \faultform \bigr) \rightarrow \bigl(\trace(\strat, \sys) \not \models \varphi\bigr)
  \label{eq:rewi1}
\end{multline}
because (a) going from $\exists \strat \forall \sys$ to $\forall \sys \exists \strat$ can only make the formula weaker, and (b) $\sys \real \varphi[o_i\leftarrow o_i'] \land \faultform$ implies $\trace(\strat, \sys) \models \varphi[o_i\leftarrow o_i'] \land \faultform$ for all $\strat$, which can only make the left side of the implication stronger.  In turn, \refeq{rewi1} is equivalent to
\begin{multline}
  \forall o_i\in \outp \scope \forall \sys' \in \mealyset(\inp,\outp \cup \{o_i'\} \setminus \{o_i\}) \scope \\
  \forall F \in \mealyset(\inp \cup \outp \cup \{o_i'\} \setminus \{o_i\}, \{o_i\}) \scope \exists \strat \in \mooreset(\outp,\inp) \scope \\
  \bigl(\sys' \real \varphi[o_i\leftarrow o_i'] \land F \real \faultform\bigr) \rightarrow \bigl(\trace(\strat, \sys' \concat F) \not \models \varphi\bigr)
  \label{eq:rewi2}
\end{multline}
because for a given $\sys' \real \varphi[o_i\leftarrow o_i']$ and $F \real \faultform$ from \refeq{rewi2} we can define an equivalent system $\sys=(\sys' \concat F) \in \mealyset(\inp,\outp \cup \{o_i'\})$ for \refeq{rewi1} such that $\sys \real \varphi[o_i\leftarrow o_i'] \land \faultform$ is satisfied.  Also, for a given $\sys \real \varphi[o_i\leftarrow o_i'] \land \faultform$ from \refeq{rewi1} we can define a corresponding $\sys' \real \varphi[o_i\leftarrow o_i']$ and $F \real \faultform$ by stripping off different outputs.
\end{proof}

Theorem~\ref{th:exists} states that \refeq{theo_eq} is a sufficient condition for a universally complete test suite to exist.  If it were also a necessary condition, then computing one test strategy per output signal would be enough.  Unfortunately, this is not the case in general.

\begin{figure}[t]
  \centering
  \begin{tikzpicture}[>=latex,->,auto,initial text={},initial distance=3mm]
\node[state,initial,inner sep=0]  at  (0,0)       (S0) {$\neg \mathsf{i}$};
\node[state,inner sep=0]          at  (1.5,0)     (S1) {$\mathsf{i}$};

\path
(S0) edge [loop above]  node[xshift=-4mm,yshift=-4mm] {$\neg o$} (S0)
(S1) edge [loop above]  node[xshift=-5mm,yshift=-4mm] {$\true$} (S1)
(S0) edge [bend left]   node[xshift=0mm,yshift=-0.5mm] {$o$} (S1)
;
\end{tikzpicture}
  \caption{Test strategy $\strat_5$.}
  \label{fig:strategy incomplete}
\end{figure}

\begin{example}\label{ex:incompl}
Consider a system with input $\inp=\{i\}$, output $\outp=\{o\}$, and specification $\varphi = \bigl( \always(i \rightarrow \always i) \wedge \eventually i \bigr) \rightarrow \bigl( \always(o \rightarrow \always o) \wedge \eventually o \wedge \always(i \vee \neg o) \bigr)$.  The left side of the implication assumes that the input $i$ is set to $\true$ at some point, after which $i$ remains $\true$.  The right side requires the same for the output $o$.  In addition, $o$ must not be raised while $i$ is still $\false$.  This specification is realizable (e.g., by always setting $o=i$).  The test suite $\suite = \{\strat_5\}$ with $\strat_5$ shown in \reffig{strategy incomplete} is universally complete with respect to fault model $\faultform = \eventually(o \leftrightarrow \neg o')$, which requires the output to flip at least once: as long as $i$ is $\false$, any correct system implementation $\sys'\in \mealyset(\{i\},\{o'\}) \real \varphi[o_i\leftarrow o_i']$ must keep the output $o'=\false$.  Eventually, $F\real \faultform$ must flip the output $o$ to $\true$.  When this happens, $i$ is set to $\true$ by $\strat_5$ so that the resulting trace $\trace(\strat, \sys' \concat F)$ violates $\varphi$.  Still, \refeq{theo_eq} is $\false$\footnote{This is (at least partially) confirmed by our test strategy synthesis tool: it reports that no test strategy with less than $12$ states can satisfy \refeq{theo_eq}.}.  Strategy $\strat_5$ does not satisfy \refeq{theo_eq} because for the system $\sys \in \mealyset(\{i\},\{o,o'\})$ that sets $o'=\true$ and $o=\false$ in all time steps, we have $\trace(\strat_5, \sys) \models \bigl(\varphi[o_i\leftarrow o_i'] \wedge \faultform \wedge \varphi \bigr)$.  The reason is that $i$ stays $\false$, so $\varphi[o_i\leftarrow o_i']$ and $\varphi$ are vacuously satisfied by $\trace(\strat_5, \sys)$.  The formula $\faultform$ is satisfied because $o \leftrightarrow \neg o'$ holds in all time steps. Thus, $\sys$ is a counterexample to $\strat_5$ satisfying \refeq{theo_eq}.  Similar counterstrategies exist for all other test strategies.
\end{example}

The fact that \refeq{theo_eq} is not a necessary condition for a universally complete test suite to exist is somewhat surprising, especially in the light of the following two lemmas.  Based on these lemmas, the subsequent propositions will show that \refeq{theo_eq} is both sufficient and necessary (i.e., one test per output is enough) for many interesting cases.

\begin{lemma}\label{th:det}
For every LTL specification $\psi$ over some inputs $\inp$ and outputs $\outp$, we have that $\exists \strat \in  \mooreset(\outp, \inp) \scope \forall \sys \in \mealyset(\inp, \outp) \scope \trace(\strat, \sys) \models \psi$ holds if and only if $\forall \sys \in \mealyset(\inp, \outp) \scope \exists \strat \in  \mooreset(\outp, \inp) \scope \trace(\strat, \sys) \models \psi$ holds.
\end{lemma}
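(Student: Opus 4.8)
The nontrivial direction is from right to left: assuming that for every Mealy machine $\sys$ there is a Moore machine $\strat$ with $\trace(\strat,\sys)\models\psi$, we must produce a single $\strat$ that works against all $\sys$ simultaneously. The left-to-right direction is immediate, since a fixed $\strat$ witnessing the $\exists\strat\forall\sys$ statement also witnesses $\forall\sys\exists\strat$ pointwise.

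For the hard direction, the plan is to view the interaction between a test strategy (a Moore machine producing inputs and reading outputs) and the system (a Mealy machine producing outputs and reading inputs) as a two-player game of perfect information on an arena built from $\psi$. Concretely, I would translate $\psi$ into a deterministic parity (or Rabin) automaton $\mathcal{A}_\psi$ over the alphabet $2^{\inp\cup\outp}$, and form the turn-based game where in each round the $\strat$-player first picks a letter in $2^\inp$, then the $\sys$-player picks a letter in $2^\outp$, and the winning condition is the acceptance condition of $\mathcal{A}_\psi$ lifted to plays. The key observation is that strategies for the $\strat$-player in this game correspond exactly to Moore machines in $\mooreset(\outp,\inp)$ (the $\strat$-player moves first in each round, hence cannot see the current output — matching the Moore condition), and strategies for the $\sys$-player correspond exactly to Mealy machines in $\mealyset(\inp,\outp)$. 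Under this correspondence, the statement ``$\forall\sys\,\exists\strat\scope\trace(\strat,\sys)\models\psi$'' says that the $\sys$-player has no winning strategy, and ``$\exists\strat\,\forall\sys\scope\trace(\strat,\sys)\models\psi$'' says that the $\strat$-player has a winning strategy. Since parity games are determined, exactly one of these holds, so they are equivalent. (One must be slightly careful that the $\sys$-player is also restricted to \emph{finite-state} strategies, i.e.\ Mealy machines rather than arbitrary strategies; but if the $\sys$-player wins at all, it wins with a finite-memory — indeed memoryless on the parity arena — strategy, which can be realized as a Mealy machine, so restricting to Mealy machines does not change who wins.)

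The steps, in order, are: (1) fix the turn-based arena and winning condition derived from $\psi$, and check that plays in the arena are in bijection with traces $\trace(\strat,\sys)$; (2) verify the correspondence between $\strat$-player strategies and Moore machines, and between $\sys$-player strategies and Mealy machines, in particular that the ``strategy moves first'' convention exactly encodes the Moore insensitivity condition $\lambda(q,x)=\lambda(q,x')$; (3) observe that, because $\omega$-regular games are finite-memory determined, if the $\sys$-player has any winning strategy it has one implementable by a Mealy machine, so the right-hand side of the lemma is equivalent to ``the $\sys$-player does not win''; (4) invoke determinacy of $\omega$-regular (parity) games to conclude that the $\sys$-player does not win iff the $\strat$-player wins, and translate a winning $\strat$-player strategy back into the desired single Moore machine $\strat$.

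The main obstacle is step (2)/(3): getting the quantifier-to-game-role dictionary exactly right, especially matching the Moore/Mealy asymmetry to the move order in the arena, and justifying that narrowing the adversary from arbitrary strategies down to Mealy machines is harmless. This is where the proof earns its keep; once the game is set up correctly, determinacy does the rest essentially for free. (An alternative, purely automata-theoretic route avoids explicit games: complement the situation via reactive synthesis over $\psi$ versus over $\neg\psi$ with the roles of the two machine types swapped — but this ultimately relies on the same determinacy fact, so the game-based argument is the cleanest presentation.)
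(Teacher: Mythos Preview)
Your proposal is correct and follows essentially the same approach as the paper: both invoke determinacy of the two-player $\omega$-regular game underlying LTL synthesis with complete information, then derive the quantifier swap from the dichotomy ``either $\exists\strat\forall\sys\,\trace\models\psi$ or $\exists\sys\forall\strat\,\trace\models\neg\psi$''. The paper's proof is a two-line appeal to (finite-memory) determinacy as a known fact, whereas you unpack the game construction, the Moore/Mealy--to--strategy correspondence, and the finite-state restriction explicitly; this extra care is sound but not a different route.
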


\begin{proof}
Synthesis from LTL specifications under complete information is (finite memory) determined~\cite{Martin75}, which means that either $\exists \strat \in  \mooreset(\outp, \inp) \scope \forall \sys \in \mealyset(\inp, \outp) \scope \trace(\strat, \sys) \models \psi$ or $\exists \sys \in \mealyset(\inp, \outp) \scope \forall \strat \in \mooreset(\outp, \inp) \scope \trace(\strat, \sys) \models \neg  \psi$ holds, but not both.  Less formal we can say that either there exists a test strategy $\strat$ that satisfies $\psi$ for all systems $\sys$, or there exists a system $\sys$ that can violate $\psi$ for all test strategies $\strat$.  From that, it follows that
\begin{eqnarray*}
&              & \exists \strat \in  \mooreset(\outp, \inp) \scope \forall \sys \in \mealyset(\inp, \outp) \scope \trace(\strat, \sys) \models \psi \\
& \text{ iff } & \neg  \exists \sys \in \mealyset(\inp, \outp) \scope \\
&              & \forall \strat \in  \mooreset(\outp, \inp) \scope \trace(\strat, \sys) \models \neg \psi \\
& \text{ iff } &
\forall \sys \in \mealyset(\inp, \outp) \scope \exists \strat \in  \mooreset(\outp, \inp) \scope \trace(\strat, \sys) \models \psi.
\end{eqnarray*}
\end{proof}

\begin{lemma}\label{th:impl}
For all LTL specifications $A,G$ over inputs $\inp$ and outputs $\outp$, we have that
\begin{eqnarray}
\begin{split}
&& \forall \sys \in \mealyset(\inp, \outp) \scope \exists \strat \in  \mooreset(\outp, \inp) \scope \\
&& (\sys \real A) \rightarrow \bigl(\trace(\strat, \sys) \models G\bigr)
\end{split}
\label{eq:impl1}
\\
\begin{split}
&\text{ iff } & \forall \sys \in \mealyset(\inp, \outp) \scope \exists \strat \in  \mooreset(\outp, \inp) \scope \\
&             & \trace(\strat, \sys) \models (A \rightarrow G).
\end{split}
\label{eq:impl2}
\end{eqnarray}
\end{lemma}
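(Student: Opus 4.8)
The plan is to prove the two directions of the biconditional separately, in each case fixing an arbitrary Mealy machine $\sys \in \mealyset(\inp,\outp)$ and reasoning about the strategy promised for it; no appeal to determinacy (\reflem{det}) is needed here. The one conceptual point to keep straight is the asymmetry between the \emph{trace-level} statement $\trace(\strat,\sys) \models A$ and the \emph{system-level} statement $\sys \real A$: the latter, by definition of realizability, says that \emph{every} Moore machine in $\mooreset(\outp,\inp)$ produces a trace satisfying $A$ when run against $\sys$, and in particular the chosen strategy $\strat$ does. This observation is what makes both directions essentially immediate.

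For the direction \refeq{impl1} $\Rightarrow$ \refeq{impl2}, I would fix $\sys$ and split on whether $\sys \real A$. If $\sys \real A$, then \refeq{impl1} hands me a $\strat \in \mooreset(\outp,\inp)$ with $\trace(\strat,\sys) \models G$, so a fortiori $\trace(\strat,\sys) \models A \rightarrow G$. If $\sys \not\real A$, then by the negated definition of realizability there is some $\moore \in \mooreset(\outp,\inp)$ with $\trace(\moore,\sys) \not\models A$; taking $\strat := \moore$ makes the antecedent of the implication false on the trace, so $\trace(\strat,\sys) \models A \rightarrow G$ holds vacuously. In both cases the witness strategy lies in $\mooreset(\outp,\inp)$, which is all \refeq{impl2} asks for.

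For the direction \refeq{impl2} $\Rightarrow$ \refeq{impl1}, I would again fix $\sys$, let $\strat \in \mooreset(\outp,\inp)$ be the strategy given by \refeq{impl2} so that $\trace(\strat,\sys) \models A \rightarrow G$, and show this same $\strat$ works for \refeq{impl1}. If $\sys \not\real A$, the implication in \refeq{impl1} is trivially true. Otherwise $\sys \real A$, and since $\strat \in \mooreset(\outp,\inp)$, realizability of $A$ by $\sys$ forces $\trace(\strat,\sys) \models A$; combining this with $\trace(\strat,\sys) \models A \rightarrow G$ gives $\trace(\strat,\sys) \models G$, as required.

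The main obstacle, such as it is, is bookkeeping rather than a hard argument: one must resist conflating $\trace(\strat,\sys) \models A$ with $\sys \real A$, and must remember that in the $\sys \not\real A$ branch the required witness is not supplied by hypothesis \refeq{impl1} at all but rather by the definition of non-realizability (a ``counterstrategy'' falsifying $A$). Keeping the quantifier order straight and checking the membership condition $\strat \in \mooreset(\outp,\inp)$ in each branch is the only thing that needs care.
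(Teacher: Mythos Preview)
Your proof is correct, and the $\Leftarrow$ direction is essentially identical to the paper's argument (both use that $\sys \real A$ forces $\trace(\strat,\sys)\models A$ for the chosen $\strat$, strengthening the trace-level antecedent to the system-level one).

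The $\Rightarrow$ direction differs. The paper argues by contrapositive: negating \refeq{impl2} yields some $\sys$ with $\sys \real (A \land \neg G)$, which in particular gives $\sys \real A$ together with $\trace(\strat,\sys)\models \neg G$ for \emph{all} $\strat$, contradicting \refeq{impl1}. Your argument is direct, splitting on whether $\sys \real A$ and, in the negative case, harvesting a counterstrategy $\moore$ witnessing $\trace(\moore,\sys)\not\models A$ straight from the definition of non-realizability. Both routes are valid and of comparable length; yours is arguably more transparent because it never needs the intermediate observation that $\sys \real (A\land\neg G)$ implies $\sys \real A$, and it makes explicit where the witnessing strategy comes from in each branch. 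The paper's contrapositive has the minor advantage of producing a single $\sys$ that defeats every $\strat$ simultaneously, which aligns stylistically with how the surrounding propositions are argued, but this is not needed for the lemma itself.
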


\begin{proof}
Direction $\Rightarrow$: We show that \refeq{impl2} being $\false$ contradicts with \refeq{impl1} being $\true$.
\begin{eqnarray*}
&              & \neg \forall \sys \in \mealyset(\inp, \outp) \scope \exists \strat \in  \mooreset(\outp, \inp) \scope \\
&              & \trace(\strat, \sys) \models (A \rightarrow G) \\
& \text{ iff } & \exists \sys \in \mealyset(\inp, \outp) \scope \forall \strat \in  \mooreset(\outp, \inp) \scope \\
&              & \trace(\strat, \sys) \models (A \land \neg  G) \\
& \text{ iff } & \exists \sys \in \mealyset(\inp, \outp) \scope \sys \real (A \land \neg  G), \text{ which implies } \\
&              & \exists \sys \in \mealyset(\inp, \outp) \scope \forall \strat \in  \mooreset(\outp, \inp) \scope \\
&              & (\sys \real A) \land \bigl(\trace(\strat, \sys) \models \neg  G\bigr).
\end{eqnarray*}
Direction $\Leftarrow$: Using the LTL semantics, we can rewrite $\trace(\strat, \sys) \models (A \rightarrow G)$ in \refeq{impl2} as $\bigl(\trace(\strat, \sys) \models A\bigr) \rightarrow \bigl(\trace(\strat, \sys) \models G\bigr)$.  Since $\sys \real A$ implies $\trace(\strat', \sys) \models A$ for every $\strat' \in \mooreset(\inp, \outp)$, the assumption in \refeq{impl1} is not weaker, so \refeq{impl1} is not stronger.
\end{proof}

These two lemmas state that quantifiers can be swapped and that assuming $\trace(\strat, \sys) \models A$ is equivalent to assuming $(\sys \real A)$ for the case where $\strat$ has full information about the outputs of $\sys$.  Yet, in our setting, test strategies $\strat \in \mooreset(\outp,\inp)$ have incomplete information about the system $\sys \in \mealyset(\inp,\outp \cup \{o_i'\})$ because they cannot observe $o_i'$.  Still, $\strat$ must enforce $(\varphi[o_i\leftarrow o_i'] \wedge \faultform) \rightarrow \neg \varphi,$ which refers to this hidden signal.  Thus,~\reflem{det} and~\ref{th:impl} cannot be applied to \refeq{theo_eq} in general.  However, in cases where there is (effectively) no hidden information, the lemmas can be used to prove that \refeq{theo_eq} is both a necessary and a sufficient condition for a universally complete test suite to exist.  The following propositions show that this holds for many cases of practical interest.

The intuitive reason is that $\varphi[o_i\leftarrow o_i']$ can be rewritten to $\varphi[o_i\leftarrow \psi]$ in \refeq{theo_eq}, which eliminates the hidden signal such that~\reflem{det} and~\ref{th:impl} can be applied.

\begin{proposition}\label{th:worksfixed}
Given a fault model of the form $\faultform = \always(o_i' \leftrightarrow \psi)$, where $\psi$ is an LTL formula over $\inp$ and $\outp$, a universally complete test suite $\suite \subseteq \mooreset(\outp,\inp)$ with respect to $\faultform,\inp,\outp$, and $\varphi$ exists if and only if \refeq{theo_eq} holds.
\end{proposition}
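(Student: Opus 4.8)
The key idea is the remark made just before the proposition: when the fault model has the special form $\faultform = \always(o_i' \leftrightarrow \psi)$ with $\psi$ over $\inp\cup\outp$ only, the auxiliary signal $o_i'$ is not really hidden information — it is pinned down, on every trace satisfying $\faultform$, to equal $\psi$, which the test strategy \emph{can} observe. So the plan is to rewrite \refeq{theo_eq} into an equivalent formula over the signals $\inp,\outp$ only (no $o_i'$), and then invoke \reflem{det} and \reflem{impl}, which are stated precisely for the full-information case. The ``only if'' direction is already given by Theorem~\ref{th:exists} reversed — wait, no: Theorem~\ref{th:exists} gives \refeq{theo_eq} $\Rightarrow$ universally complete. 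So the real content here is the converse, ``universally complete $\Rightarrow$ \refeq{theo_eq}''; the forward direction is just Theorem~\ref{th:exists} and needs only to be cited.

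Concretely, I would proceed as follows. First, fix an output $o_i\in\outp$. Observe that on any trace $\trace$ with $\trace\models\faultform$ we have $o_i'\in\letter_j \iff \letter_j\models\psi$ for all $j$, hence $\trace\models\varphi[o_i\leftarrow o_i'] \iff \trace\models\varphi[o_i\leftarrow\psi]$, and moreover a system $\sys\in\mealyset(\inp,\outp\cup\{o_i'\})$ with $\sys\real\faultform$ behaves, on the signals $\inp\cup\outp$, exactly like some $\sys^\ast\in\mealyset(\inp,\outp)$, and conversely any $\sys^\ast\in\mealyset(\inp,\outp)$ can be extended by a component computing $o_i'\mathrel{:=}\psi$ (an LTL formula over $\inp\cup\outp$ is in general not computable by a Mealy machine reacting only to the current step, but it \emph{is} a regular $\omega$-property, so it can be tracked by a deterministic automaton fed the inputs and outputs; one must be a bit careful that $\psi$ may reference future steps via $\nextt$, so the ``output'' $o_i'$ at step $j$ may depend on inputs/outputs after step $j$ — this is the delicate point, see below). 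Granting this correspondence, \refeq{theo_eq} restricted to the relevant systems is equivalent to
\begin{multline*}
  \forall \sys^\ast\in\mealyset(\inp,\outp)\scope\exists\strat\in\mooreset(\outp,\inp)\scope\\
  \trace(\strat,\sys^\ast)\models\bigl((\varphi[o_i\leftarrow\psi]\wedge\faultform') \rightarrow \neg\varphi\bigr),
\end{multline*}
where $\faultform'$ is $\faultform$ with $o_i'$ eliminated (it becomes $\always(\psi\leftrightarrow\psi)\equiv\true$, so it simply drops out). Now \reflem{impl} with $A=\varphi[o_i\leftarrow\psi]$ and $G=\neg\varphi$ converts this to the ``$\sys^\ast\real A$''-form, and \reflem{det} swaps the quantifiers back to $\exists\strat\,\forall\sys^\ast$. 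Finally I unwind the universal-completeness hypothesis (Definition~\ref{def:complete}) using the same system/fault correspondence to see that it says exactly the $\forall\sys^\ast\exists\strat$ form, closing the equivalence. A \textbf{for-each-$o_i$} wrapper around all of this, together with the observation that $\suite$ can be taken to be the (finite) set of the per-output strategies, finishes it.

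The main obstacle is the one flagged parenthetically above: making the ``the hidden signal is really a function of the visible ones'' argument rigorous when $\psi$ contains $\nextt$ (or $\until$), so that $o_i'$ at time $j$ genuinely depends on the future of the visible trace. The clean way around it is to avoid talking about $o_i'$ as a \emph{machine output} at all and instead argue purely at the level of traces and of the realizability relation $\real$: show (i) for every $\sys\real\varphi[o_i\leftarrow o_i']\wedge\faultform$ the visible projection $\sys^\ast$ satisfies $\sys^\ast\real\varphi[o_i\leftarrow\psi]$, because any Moore counter-player against $\sys^\ast$ lifts to one against $\sys$ and $\faultform$ forces $o_i'=\psi$ along the resulting trace; and (ii) conversely every $\sys^\ast\real\varphi[o_i\leftarrow\psi]$ arises this way, by taking $\sys=\sys^\ast$ with $o_i'$ simply ignored and noting the value of $o_i'$ is immaterial to whether $\trace\models\varphi[o_i\leftarrow\psi]$ and that $\faultform$ is then vacuous ($\always(o_i'\leftrightarrow\psi)$ can be made true by \emph{choosing} $o_i'$ appropriately, but since $o_i'$ doesn't occur elsewhere after the rewrite, the choice is free). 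Once that trace-level bridge is in place, the application of the two lemmas is purely mechanical, and the proposition follows without ever needing $\psi$ to be Mealy-computable.
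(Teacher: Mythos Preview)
Your plan is essentially the paper's own argument: rewrite $\varphi[o_i\leftarrow o_i']\wedge\faultform$ as $\varphi[o_i\leftarrow\psi]\wedge\always(o_i'\leftrightarrow\psi)$, drop $o_i'$ to land in the full-information setting over $\inp\cup\outp$, invoke \reflem{det} and \reflem{impl}, and then reconnect to \refeq{rewi1} (and hence \refeq{rewi2} and Definition~\ref{def:complete}) via the steps already proved for Theorem~\ref{th:exists}. You are also right that the ``if'' direction is just Theorem~\ref{th:exists}.

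You deserve credit for flagging the Mealy-computability issue when $\psi$ contains future operators---the paper's proof simply asserts, at the step where $o_i'$ is reintroduced, that ``it is irrelevant whether the truth value of $\psi$ is available as additional output $o_i'$ of $\sys$ or not.'' However, your fix for direction~(ii) does not close the gap. To argue that every $\sys^\ast\real\varphi[o_i\leftarrow\psi]$ ``arises this way'' you need some $\sys\in\mealyset(\inp,\outp\cup\{o_i'\})$ projecting to $\sys^\ast$ with $\sys\real\always(o_i'\leftrightarrow\psi)$; that is \emph{exactly} the Mealy-computability obstruction, because $\real$ quantifies over all Moore opponents, not just one trace, so the choice of $o_i'$ is not ``free.'' Concretely, with $\inp=\{i\}$, $\outp=\{o\}$, $\varphi=\true$, $\psi=\nextt i$: no $F$ (and no $\sys$) realizes $\always(o'\leftrightarrow\nextt i)$, so a universally complete test suite exists vacuously, yet \refeq{theo_eq} fails since for any fixed $\strat\in\mooreset(\{o\},\{i\})$ a system $\sys$ (chosen after $\strat$, unobserved on $o'$) can simulate $\strat$ and set $o'_j=i_{j+1}$. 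So the reintroduction step is genuinely delicate, and neither your trace-level argument nor the paper's one-line justification handles it for arbitrary LTL $\psi$. For propositional $\psi$ (no temporal operators)---which covers the paper's actual applications such as permanent bit flips $\psi=\neg o_i$---the obstruction disappears, and both your plan and the paper's proof go through cleanly.
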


\begin{proof}
$\varphi[o_i\leftarrow o_i'] \land \always(o_i' \leftrightarrow \psi)$ is equivalent to $\varphi[o_i\leftarrow \psi] \land \always(o_i' \leftrightarrow \psi)$.  Thus, \refeq{theo_eq} becomes
\begin{multline*}
  \forall o_i\in \outp \scope \exists \strat \in \mooreset(\outp,\inp) \scope \forall \sys \in \mealyset(\inp,\outp \cup \{o_i'\}) \scope \\
  \trace(\strat, \sys) \models \bigl((\varphi[o_i\leftarrow \psi] \land \always(o_i' \leftrightarrow \psi)) \rightarrow \neg  \varphi \bigr),
\end{multline*}
which is equivalent to
\begin{multline*}
  \forall o_i\in \outp \scope \exists \strat \in \mooreset(\outp,\inp) \scope \forall \sys \in \mealyset(\inp,\outp) \scope \\
  \trace(\strat, \sys) \models \bigl(\varphi[o_i\leftarrow \psi] \rightarrow \neg  \varphi \bigr)
 \end{multline*}
Because of the $\always$ operator, a unique value for $o_i'$ exist in all time steps and thus, $o_i'$ is just an abbreviation for $\psi$.  Whether this abbreviation $o_i'$ is available as output of $\sys$ or not is irrelevant, because $\strat$ cannot observe $o_i'$ anyway.  Since $o_i'$ no longer occurs, \reflem{det} and \reflem{impl} can be applied to prove equivalence between \refeq{theo_eq} and
\begin{multline*}
  \forall o_i\in \outp \scope \forall \sys \in \mealyset(\inp,\outp) \scope \exists \strat \in \mooreset(\outp,\inp) \scope \\
  (\sys \real \varphi[o_i\leftarrow \psi]) \rightarrow \trace(\strat, \sys) \not\models \varphi.
\end{multline*}
As $\strat$ cannot observe $o_i'$, it is irrelevant whether the truth value of $\psi$ is available as additional output $o_i'$ of $\sys$ or not.  Hence, the above formula is equivalent to
\begin{multline*}
  \forall o_i\in \outp \scope \forall \sys \in \mealyset(\inp,\outp\cup\{o_i'\}) \scope \exists \strat \in \mooreset(\outp,\inp) \scope \\
  (\sys \real (\varphi[o_i\leftarrow \psi] \land \always(o_i' \leftrightarrow \psi)) \rightarrow \trace(\strat, \sys) \not\models \varphi
\end{multline*}
and
\begin{multline*}
  \forall o_i\in \outp \scope \forall \sys \in \mealyset(\inp,\outp\cup\{o_i'\}) \scope \exists \strat \in \mooreset(\outp,\inp) \scope \\
  (\sys \real (\varphi[o_i\leftarrow o_i'] \land \delta) \rightarrow \trace(\strat, \sys) \not\models \varphi,
\end{multline*}
i.e., to \refeq{rewi1}.  The remaining steps can be taken from the proof of Theorem~\ref{th:exists}.
\end{proof}

Proposition~\ref{th:worksfixed} entails that computing one test strategy per output $o_i\in\outp$ is enough for fault models such as permanent bit flips (defined by $\faultform = \always(o_i' \leftrightarrow \neg o_i)$).

\begin{proposition}\label{th:worksstuck}
If the fault model $\faultform$ does not reference $o_i'$, a universally complete test suite $\suite \subseteq \mooreset(\outp,\inp)$ with respect to $\faultform,\inp,\outp$, and $\varphi$ exists iff \refeq{theo_eq} holds.
\end{proposition}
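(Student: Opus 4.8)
The plan is to establish a cycle of implications, reusing the machinery from the proof of Theorem~\ref{th:exists} and from \reflem{det} and~\ref{th:impl}. The ``if'' direction --- \refeq{theo_eq} implies that a universally complete test suite exists --- is exactly Theorem~\ref{th:exists} and needs no assumption on $\faultform$, so the whole burden is on the ``only if'' direction. For that, I would first note that ``a universally complete test suite exists'' is equivalent to \refeq{rewi1}: taking the largest suite $\suite = \mooreset(\outp,\inp)$, the condition \refeq{complete_eq} becomes \refeq{rewi2}, and the proof of Theorem~\ref{th:exists} already shows \refeq{rewi2} $\Leftrightarrow$ \refeq{rewi1}. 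Hence it suffices to prove \refeq{rewi1} $\Rightarrow$ \refeq{theo_eq}.

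The key idea --- and the only place the hypothesis ``$\faultform$ does not reference $o_i'$'' is used --- is that $o_i'$ then occurs in the body of \refeq{theo_eq} solely inside $\varphi[o_i\leftarrow o_i']$, and an adversarial system can always neutralise that subformula by producing $o_i' := o_i$. Concretely, I would argue the contrapositive. First, \refeq{rewi1} implies that, for every $o_i \in \outp$, no Mealy machine $\sys^* \in \mealyset(\inp,\outp)$ realizes $\faultform \wedge \varphi$: if one did, its extension $\sys \in \mealyset(\inp,\outp\cup\{o_i'\})$ that outputs $o_i' := o_i$ in every step would satisfy $\sys \real (\varphi[o_i\leftarrow o_i'] \wedge \faultform)$ --- since $o_i' = o_i$ pointwise makes $\varphi[o_i\leftarrow o_i']$ equivalent to $\varphi$ on every trace of $\sys$, and $\faultform$ ignores $o_i'$ --- while $\trace(\strat,\sys) \models \varphi$ for \emph{every} $\strat \in \mooreset(\outp,\inp)$ (because $\trace(\strat,\sys)$ agrees on $\inp\cup\outp$ with $\trace(\strat,\sys^*)$ and $\sys^* \real \varphi$), contradicting \refeq{rewi1}. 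Second, by the determinacy of complete-information LTL games~\cite{Martin75} --- the fact underlying \reflem{det} --- this non-realizability is equivalent to: for every $o_i$ there is a $\strat \in \mooreset(\outp,\inp)$ with $\trace(\strat, \sys) \models \faultform \rightarrow \neg \varphi$ for all $\sys \in \mealyset(\inp,\outp)$.

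It then remains to lift this ``$o_i'$-free'' property back to \refeq{theo_eq}, which is immediate: since $\strat$ cannot observe $o_i'$, for any $\sys \in \mealyset(\inp,\outp\cup\{o_i'\})$ the trace $\trace(\strat,\sys)$ agrees on $\inp\cup\outp$ with $\trace(\strat,\sys')$ for the $o_i'$-projection $\sys' \in \mealyset(\inp,\outp)$; hence $\trace(\strat,\sys) \models \faultform \rightarrow \neg\varphi$, which is weaker than $(\varphi[o_i\leftarrow o_i'] \wedge \faultform) \rightarrow \neg\varphi$. This yields \refeq{theo_eq} and closes the cycle. In passing, the same reductions together with \reflem{det} and~\ref{th:impl} show that for such $\faultform$ the content of \refeq{theo_eq} is just ``for every $o_i$, no realization of $\varphi$ can also satisfy $\faultform$'', equivalently ``for every $\sys \in \mealyset(\inp,\outp)$ with $\sys \real \faultform$ some test strategy forces $\neg\varphi$'' --- a pleasant simplification of the general condition, and the analogue of how the proof of Proposition~\ref{th:worksfixed} ends.

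The routine parts --- checking that extending or projecting a Mealy machine on the auxiliary output $o_i'$ preserves realization of the $\inp\cup\outp$-formulas $\varphi$ and $\faultform$ and leaves the induced $\inp\cup\outp$-trace unchanged, and re-deriving \refeq{rewi1}$\Leftrightarrow$\refeq{rewi2}$\Leftrightarrow$\refeq{complete_eq} --- go exactly as in the proofs of Theorem~\ref{th:exists} and Proposition~\ref{th:worksfixed}. The one step that needs real care, and the main obstacle, is the ``mirroring'' argument: one must be sure that no cleverer use of the hidden signal $o_i'$ could help the adversarial system win \refeq{theo_eq} than simply copying $o_i$. This is where the hypothesis bites, because the system is forced to make the consequent $\varphi$ true anyway, so once $o_i' = o_i$ the antecedent's copy $\varphi[o_i\leftarrow o_i']$ comes for free and $\faultform$ is unaffected; if $\faultform$ were permitted to mention $o_i'$ (as in Example~\ref{ex:incompl}) mirroring would perturb $\faultform$ and both the argument and the conclusion would break down.
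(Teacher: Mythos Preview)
Your proposal is correct and follows essentially the same route as the paper. Both arguments pivot on the intermediate simplified condition $\forall o_i \,\exists \strat \,\forall \sys \in \mealyset(\inp,\outp)\!: \trace(\strat,\sys)\models \faultform\rightarrow\neg\varphi$ (the paper calls this \refeq{sap0}), and both obtain it from the non-existence of a Mealy machine realizing $\faultform\wedge\varphi$ via determinacy; the key step in each is precisely your ``mirroring'' extension $o_i' := o_i$, which the paper uses at the passage \refeq{sap04}$\Leftrightarrow$\refeq{sap05}. The only difference is organisational: the paper packages the argument as two biconditional lemmas (\refeq{theo_eq}$\Leftrightarrow$\refeq{sap0} and \refeq{sap0}$\Leftrightarrow$\refeq{rewi1}), whereas you run a single implication chain \refeq{rewi1}$\Rightarrow$\refeq{sap0}$\Rightarrow$\refeq{theo_eq} for the ``only if'' direction, which is slightly more economical but proves the same thing by the same means.
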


\begin{proof}
We show that \refeq{theo_eq} holds if and only if \refeq{rewi1} holds.  The remaining steps have already been proven for Theorem~\ref{th:exists}.

\begin{lemma}\label{th:aux1}
\refeq{theo_eq} holds if and only if
\begin{equation}
\begin{split}
\forall o_i\in \outp \scope \exists \strat \in \mooreset(\outp,\inp) \scope \forall \sys \in \mealyset(\inp,\outp) \scope \\
\trace(\strat, \sys) \models (\faultform \rightarrow \neg  \varphi).
\label{eq:sap0}
\end{split}
\end{equation}
\end{lemma}

\begin{proof}
Direction $\Leftarrow$ is obvious because \refeq{theo_eq} contains stronger assumptions (and $\forall \sys \in \mealyset(\inp,\outp)$ can be changed to $\forall \sys \in \mealyset(\inp,\outp\cup\{o_i'\})$ in \refeq{sap0} because $\faultform \rightarrow \neg \varphi$ does not contain $o_i'$).

\noindent Direction $\Rightarrow$: We show that \refeq{sap0} being $\false$ contradicts with \refeq{theo_eq} being $\true$.
\begin{eqnarray}
&&
\neg \forall o_i\in \outp \scope \exists \strat \in \mooreset(\outp,\inp) \scope \notag \\ &&
\forall \sys \in \mealyset(\inp,\outp) \scope \trace(\strat, \sys) \models (\faultform \rightarrow \neg  \varphi)
\label{eq:sap01}
\allowdisplaybreaks \\
&\text{iff }&
\exists o_i\in \outp \scope \forall \strat \in \mooreset(\outp,\inp) \scope \notag \\ &&
\exists \sys \in \mealyset(\inp,\outp) \scope \trace(\strat, \sys) \models (\faultform \land \varphi)
\label{eq:sap02}
\allowdisplaybreaks \\
&\text{iff }&
\exists o_i\in \outp \scope \exists \sys \in \mealyset(\inp,\outp) \scope \notag \\ &&
\forall \strat \in \mooreset(\outp,\inp) \scope \trace(\strat, \sys) \models (\faultform \land \varphi)
\label{eq:sap03}
\allowdisplaybreaks \\
&\text{iff }&
\exists o_i\in \outp \scope \exists \sys \in \mealyset(\inp,\outp) \scope \sys \real (\faultform \land \varphi)
\label{eq:sap04}
\allowdisplaybreaks \\
&\text{iff }&
\exists o_i\in \outp \scope \exists \sys' \in \mealyset(\inp,\outp\cup\{o_i'\}) \scope \notag \\ &&
\sys' \real (\varphi[o_i\leftarrow o_i'] \land \faultform \land \varphi),
\label{eq:sap05}
\allowdisplaybreaks \\
&\text{iff }&
\exists o_i\in \outp \scope \exists \sys' \in \mealyset(\inp,\outp\cup\{o_i'\}) \scope \notag \\ &&
\forall \strat \in \mooreset(\outp\cup\{o_i'\},\inp) \scope \notag\\&&
\trace(\strat, \sys) \models (\varphi[o_i\leftarrow o_i'] \land \faultform \land \varphi),
\label{eq:sap06}
\allowdisplaybreaks \\
&\text{iff }&
\exists o_i\in \outp \scope \forall \strat \in \mooreset(\outp\cup\{o_i'\},\inp) \scope \notag \\ &&
\exists \sys' \in \mealyset(\inp,\outp\cup\{o_i'\}) \scope \notag\\&&
\trace(\strat, \sys) \models (\varphi[o_i\leftarrow o_i'] \land \faultform \land \varphi),
\label{eq:sap07}
\allowdisplaybreaks \\
&\Rightarrow&
\exists o_i\in \outp \scope \forall \strat \in \mooreset(\outp,\inp) \scope \notag \\ &&
\exists \sys' \in \mealyset(\inp,\outp\cup\{o_i'\}) \scope \notag\\&&
\trace(\strat, \sys) \models(\varphi[o_i\leftarrow o_i'] \land \faultform \land \varphi),
\label{eq:sap08}
\end{eqnarray}
which contradicts \refeq{theo_eq}.  (\ref{eq:sap02})$\Leftrightarrow$(\ref{eq:sap03}) holds because of \reflem{det} and (\refeq{sap04})$\Leftrightarrow$(\refeq{sap05}) holds because $\faultform \land \varphi$ does not contain $o_i'$, so $\sys'$ can be $\sys$ with $o_i' \leftrightarrow o_i$.  (\refeq{sap06})$\Leftrightarrow$(\refeq{sap07}) holds because of \reflem{det}.  Finally, (\refeq{sap07}) implies (\refeq{sap08}) because $\strat$ has less information in (\refeq{sap08}).\\
\end{proof}

\begin{lemma}\label{th:aux2}
\refeq{sap0} holds if and only if \refeq{rewi1} holds.
\end{lemma}

\begin{proof}
Direction $\Rightarrow$: is obvious because \refeq{sap0} is equivalent to \refeq{theo_eq} (\reflem{aux1}) and \refeq{theo_eq} implies \refeq{rewi1} (see proof for Theorem~\ref{th:exists}). \\
Direction $\Leftarrow$: we show that \refeq{sap0} being $\false$ contradicts \refeq{rewi1} being $\true$.  \refeq{sap0} being $\false$ implies \refeq{sap05} (see above).  As $\sys' \real (\varphi[o_i\leftarrow o_i'] \land \faultform \land \varphi)$ implies $(\sys' \real \varphi[o_i\leftarrow o_i'] \land \faultform) \land \bigl(\trace(\strat, \sys)\models \varphi \bigl)$ for all $\strat \in \mooreset(\outp\cup\{o_i'\},\inp)$ and thus also for all $\strat \in \mooreset(\outp,\inp)$, \refeq{rewi1} cannot hold.
\end{proof}
\end{proof}

Thus, the assumption $\sys' \real \varphi[o_i\leftarrow o_i']$ can be dropped from \refeq{complete_eq} if the fault model does not reference $o_i'$.  Correspondingly, $\trace(\strat, \sys) \models \bigl((\varphi[o_i\leftarrow o_i'] \wedge \faultform) \rightarrow \neg \varphi \bigr)$ simplifies to $\trace(\strat, \sys) \models (\faultform \rightarrow \neg \varphi)$ in \refeq{theo_eq}.  Since $o_i'$ is now gone, \reflem{det} and~\ref{th:impl} apply. In general, the assumption $\sys' \real \varphi[o_i\leftarrow o_i']$ is needed to prevent a faulty system $\sys'\not\real\varphi[o_i\leftarrow o_i']$ from compensating the fault $F\real\faultform$ such that $\sys'\concat F \real \varphi$.  E.g., for $\inp=\emptyset$, $\outp=\{o\}$, $\varphi=\always o$ and $\faultform = \always(o \leftrightarrow \neg o')$, \refeq{complete_eq} would be $\false$ without $\sys'\real\varphi[o_i\leftarrow o_i']$ because there exists an $\sys'$ that always sets $o'=\false$, in which case $\sys' \concat F$ has $o$ correctly set to $\true$.  However, if $\faultform$ does not reference $o'$, such a fault compensation is not possible.

Proposition~\ref{th:worksstuck} applies to permanent or transient stuck-at-0 or stuck-at-1 faults (e.g., $\faultform=\eventually \neg o_i$ or $\faultform=\always\eventually o_i$), but also to faults where $o_i$ keeps its previous value (e.g., $\faultform=\eventually(o_i\leftrightarrow \nextt(o_i)$) or takes the value of a different input or output (e.g., $\faultform=\always\eventually(o_i \leftarrow i_3)$).  Together with Proposition~\ref{th:worksfixed}, it shows that computing one test strategy per output is enough for many interesting fault models.  Finally, even if neither Proposition~\ref{th:worksfixed} nor Proposition~\ref{th:worksstuck} applies, computing one test strategy per output may still suffice for the concrete $\varphi$ and $\faultform$ at hand.  In the next section, we thus rely on \refeq{theo_eq} to compute one test strategy per output in order to obtain universally complete test suites.

\subsection{Test Strategy Computation}
\label{sec:comp}

\paragraph{Basic idea}
Our test case generation approach builds upon Theorem~\ref{th:exists}: for every output $o_i\in\outp$, we want to find a test strategy $\strat_i \in \mooreset(\outp,\inp)$ such that $\forall \sys \in \mealyset(\inp,\outp \cup \{o_i'\}) \scope \trace(\strat_i, \sys) \models \bigl((\varphi[o_i\leftarrow o_i'] \wedge \faultform) \rightarrow \neg \varphi \bigr)$ holds.  Recall from Section~\ref{sec:prelim} that a synthesis procedure $\moore = \syntp(\inp, \outp, \psi, \inp',\constr)$ with partial information computes a Moore machine $\moore\in \mooreset(\inp', \outp) \setminus \constr$ with $\inp'\subseteq \inp$ such that a certain LTL objective $\psi$ is enforced in all environments, i.e., $\forall \sys \in \mealyset(\outp, \inp) \scope \trace(\moore,\sys) \models \psi$.
If no such $\moore$ exists, $\syntp$ returns \unreal.  Also recall that a test strategy is a Moore machine with input and output signals swapped.  We can thus call $\strat_i := \synt_p\bigl(\outp \cup \{o_i'\}, \inp, (\varphi[o_i\leftarrow o_i'] \wedge \faultform) \rightarrow \neg \varphi, \outp,\constr \bigr)$ for every output $o_i \in \outp$ in order to obtain a universally complete test suite with respect to fault model $\faultform$ for a system with inputs $\inp$, outputs $\outp$, and specification $\varphi$.  If $\synt_p$ succeeds (does not return \unreal) for all $o_i \in \outp$, the resulting test suite $\suite = \{\strat_i \mid o_i\in \outp\}$ is guaranteed to be universally complete.  However, since Theorem~\ref{th:exists} only gives a sufficient but not a necessary condition, this procedure may fail to find a universally complete test suite, even if one exists, in general.  In cases where Proposition~\ref{th:worksfixed} or Proposition~\ref{th:worksstuck} applies, it is both sound and complete, though.

\begin{figure}[t]
  \centering
  \includegraphics[width=1.0\textwidth]{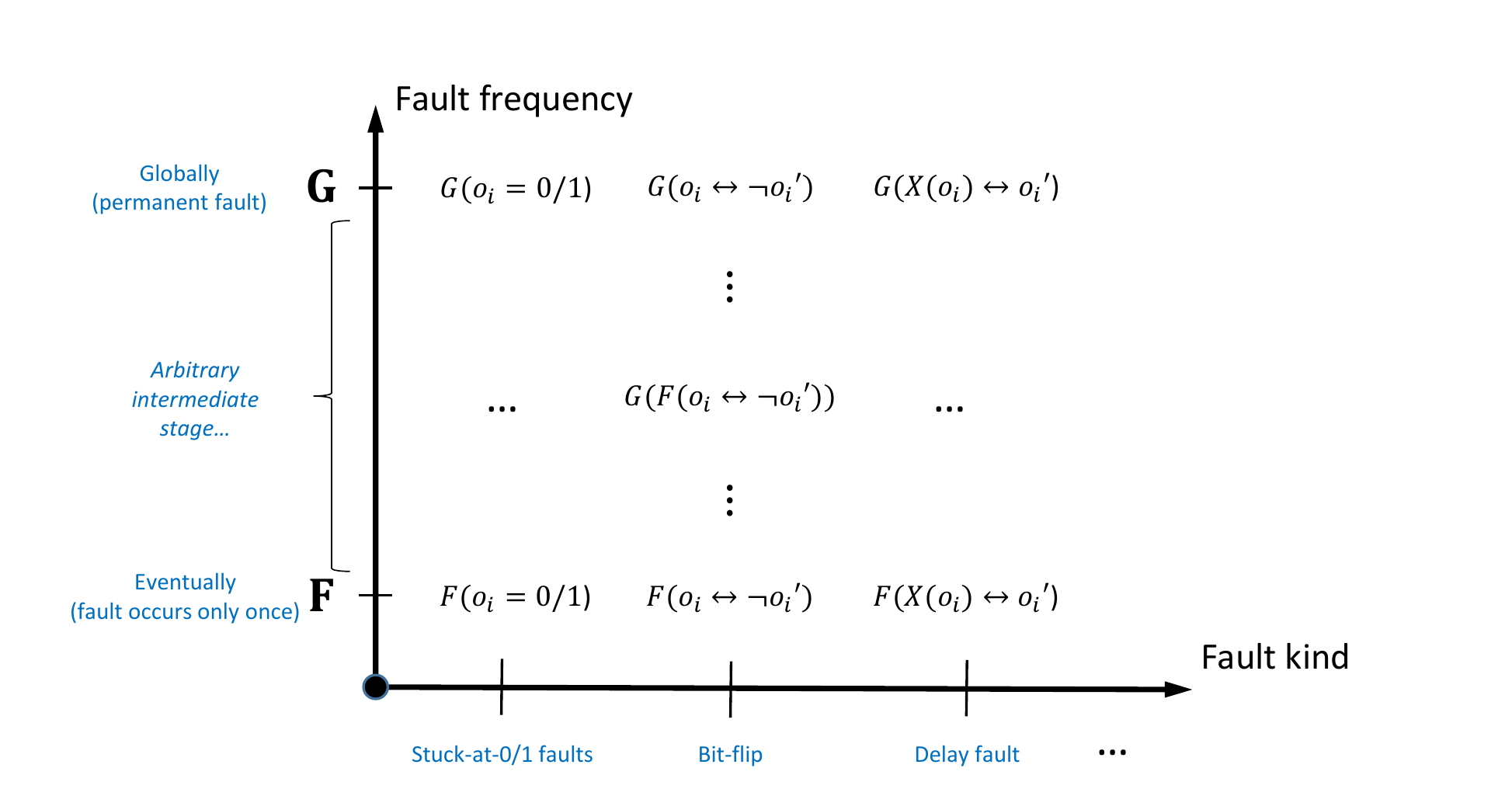}
  \caption{Relationship between fault kind and fault frequency.}
  \label{fig:faultmodelfreqdia}
\end{figure}

\paragraph{Fault models}
In order to simplify the user input, we split the fault model $\faultform$ in our coverage objective from Definition~\ref{def:complete} into two parts: the fault kind $\faultkind$ and the fault frequency $\faultfreq$ (\reffig{faultmodelfreqdia} illustrates the relationship).  The fault kind $\faultkind$ is an LTL formula that is given by the user and defines \emph{which} faults we consider.  For instance, $\faultkind=\neg o_i$ describes a stuck-at-0 fault, $\faultkind=o_i \leftrightarrow \neg o_i'$ defines a bit-flip, and $\faultkind=o_i' \leftrightarrow \nextt(o_i)$ describes a delay by one time step.  The fault frequency $\faultfreq$ describes \emph{how often} a fault of the specified kind occurs, and is chosen by our algorithm, unless it is specified by the user.  We distinguish $4$ fault frequencies, which we describe using temporal LTL operators.
\begin{compactitem}
\item Fault frequency $\always$ means that the fault is permanent.
\item Frequency $\eventually \! \always$ means that the fault occurs from some time step $i$ on permanently.  Yet, we do not make any assumptions about the precise value of $i$.
\item Frequency $\always \! \eventually$ states that the fault strikes infinitely often, but not when exactly.
\item Frequency $\eventually$ means that the fault occurs at least once.
\end{compactitem}
The fault model $\faultform$ is then defined as $\faultform=\faultfreq(\faultkind)$.  Note that there is a natural order among our $4$ fault frequencies: a fault of kind $\faultkind$ that occurs permanently (frequency $\always$) is just a special case of the same fault $\faultkind$ occurring from some point onwards (frequency $\eventually \! \always$), which is in turn a special case of $\faultkind$ occurring infinitely often (frequency $ \always \! \eventually$), which is a special case of $\faultkind$ occurring at least once.  Thus, a test strategy that reveals a fault that occurs at least once (without knowing when) will also reveal a fault that occurs infinitely often, etc.  We say that $\eventually$ is the lowest and $\always$ is the highest fault frequency.  In our approach, we thus compute test strategies to detect faults at the lowest frequency for which a test strategy can be found.

\begin{algorithm}[tb]
\caption{\textsc{SyntLtlTest}: Synthesizes a universally complete test suite from an LTL specification for all outputs in $\outp$}
\label{alg:SyntLtlTest}
\begin{algorithmic}[1]
\ProcedureRet{SyntLtlTest}
             {\inp, \outp, \varphi, \faultkind}
             {A set $\suite$ of test strategies}
             \State $\suite:= \emptyset$
   \For{\textbf{each} $o_i \in \outp$} \label{alg:SyntLtlTest:0}
   \State $\suite := \suite\ \cup\ $\Call{SyntLtlIterate}{$\inp, \outp, \varphi, o_i, \faultkind,\emptyset$}; \label{alg:SyntLtlTest:1}
  \EndFor
  \State \textbf{return} $\suite$
\EndProcedure
\end{algorithmic}
\end{algorithm}

\begin{algorithm}[tb]
\caption{\textsc{SyntLtlIterate}: Synthesize an adaptive test strategy from an LTL specification with the lowest fault occurrence frequency}
\label{alg:SyntLtlIterate}
\begin{algorithmic}[1]
\ProcedureRet{SyntLtlIterate}
             {\inp, \outp, \varphi, o_i, \faultkind,\constr}
             {A singleton $\{ \strat \}$ with a test strategy~$\strat$ on success or $\emptyset$}
    \For{\textbf{each} $\faultfreq$ from $(\eventually,
                             \always \! \eventually,
                             \eventually \! \always,
                             \always )$
      in this order} \label{alg:SyntLtlIterate:1}
      \State $\strat := \synt_p\bigl(\outp \cup \{o_i'\}, \inp,
      \bigl(\varphi[o_i\leftarrow o_i'] \wedge \faultfreq(\faultkind)\bigr)
          \rightarrow
                \neg \varphi,
      \outp,\constr
      \bigr)$ \label{alg:SyntLtlIterate:2}
      \If{$\strat \neq \unreal$}
        \State \textbf{return} $\{\strat\}$; \label{alg:SyntLtlIterate:3}
      \EndIf
  \EndFor
  \State \textbf{return} $\emptyset$
\EndProcedure
\end{algorithmic}
\end{algorithm}

\paragraph{Algorithm}
The procedure \textsc{SyntLtlTest} in \refalg{SyntLtlTest} formalizes our approach using the procedure \textsc{SyntLtlIterate} in \refalg{SyntLtlIterate} as a helper.  The input consists of (1) the inputs $\inp$ of the SUT, (2) the outputs $\outp$ of the SUT, (3) an LTL specification $\varphi$ of the SUT, and (4) a fault kind $\faultkind$.  The result of \textsc{SyntLtlTest} is a test suite $\suite$.  The algorithm iterates over all outputs $o_i \in \outp$ (Line~\ref{alg:SyntLtlTest:0}) and invokes the procedure \textsc{SyntLTLIterate} (Line~\ref{alg:SyntLtlTest:1}).  The procedure \textsc{SyntLTLIterate} then iterates over the $4$ fault frequencies (Line~\ref{alg:SyntLtlIterate:1}), starting with the lowest one, and attempts to compute a strategy to reveal a fault (Line~\ref{alg:SyntLtlIterate:2}).  If such a strategy exists, it is returned to \refalg{SyntLtlTest} and added to $\suite$.  Otherwise, the procedures proceeds with the next higher fault frequency.

\paragraph{Sanity checks}
Note that our coverage goal in \refeq{complete_eq} is vacuously satisfied by any test suite if $\varphi$ or $\faultform$ is unrealizable.  The reason is that the test suite must reveal \emph{every} fault $F$ realizing $\faultform$ for \emph{every} system $\sys'$ realizing $\varphi$.  If there is no such fault or system, this is trivial.  As a sanity check, we thus test the (Mealy) realizability of $\varphi$ and $\always\faultkind$ before starting \refalg{SyntLtlTest} (because if $\always\faultkind$ is realizable, then so are $\eventually \always\faultkind$, $\always \eventually\faultkind$ and $\eventually\faultkind$). 

\paragraph{Handling unrealizability}
If, for some output, Line~\ref{alg:SyntLtlIterate:2} of \refalg{SyntLtlIterate} returns \unreal for the highest fault frequency $\faultfreq=\always$, we print a warning and suggest that the user examines these cases manually.  There are two possible reasons for unrealizability. First, due to limited observability, we do not find a test strategy although one exists (see~\refex{incompl}).  Second, no test strategy exists because there is some $\sys' \real \varphi[o_i\leftarrow o_i']$ and $F \real \faultform$ such that the composition $\sys = \sys' \concat F$ (see~\reffig{tg}) is correct, i.e., $\sys' \concat F \real \varphi$.  In other words, for some realization, adding the fault may result in an equivalent mutant in the sense that the specification is still satisfied.  For example, in case of a stuck-at-0 fault model, there may exist a realization of the specification that has the considered output $o_i\in\outp$ fixed to $\false$.  Such a high degree of underspecification is at least suspicious and may indicate unintended vacuities~\cite{BeerBER01} in the specification $\varphi$, which should be investigated manually.  If Proposition~\ref{th:worksfixed} or~\ref{th:worksstuck} applies, or if $\synt\bigl(\outp \cup \{o_i'\}, \inp, \bigl(\varphi[o_i\leftarrow o_i'] \wedge \always(\faultkind)\bigr) \rightarrow \neg \varphi,\constr \bigr)$ returns \unreal, we can be sure that the second reason applies.  Then, we can even compute additional diagnostic information in the form of two Mealy machines $\sys' \real \varphi[o_i\leftarrow o_i']$ and $F \real \faultform$ (by synthesizing some Mealy machine $S\real (\varphi[o_i\leftarrow o_i'] \wedge \always(\faultkind) \wedge \varphi)$ and splitting it into $\sys'$ and $F$ by stripping off different outputs).  The user can then try to find inputs for $\sys'\concat F$ such that the resulting trace violates the specification.  Failing to do so, the user will understand why no test strategy exists (see also~\cite{KonighoferHB13}).  For cases where the specification is as intended but no test strategy exists, we can follow the approach by Faella~\cite{Faella08,Faella09} to synthesize best-effort strategies that are not guaranteed to cause a specification violation but at least do not give up trying. But we leave this extension for future work.

\paragraph{Complexity}
Both $\synt_p(\outp, \inp, \psi, \outp',\constr)$ and $\synt(\outp, \inp, \psi,\constr)$ are 2EXPTIME complete in $|\psi|$~\cite{KupfermanV00}, so the execution time of \refalg{SyntLtlIterate}, and consequently also \refalg{SyntLtlTest}, are at most doubly exponential in $|\varphi| + |\faultkind|$.

\begin{theorem}\label{th:alg}
For a system with inputs $\inp$, outputs $\outp$, and LTL specification $\varphi$ over $\inp\cup \outp$, if the fault kind $\faultkind$ is of the form $\faultkind=\psi$ or $\faultkind = (o_i' \leftrightarrow \psi)$, where $\psi$ is an LTL formula over $\inp$ and $\outp$, $\textsc{SyntLtlTest}(\inp, \outp, \varphi, \faultkind)$ will return a universally complete test suite with respect to the fault model $\faultform=\always(\faultkind)$ if such a test suite exists.
\end{theorem}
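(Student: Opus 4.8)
The plan is to combine the exact characterization of when a universally complete test suite exists (Propositions~\ref{th:worksfixed} and~\ref{th:worksstuck}) with the monotonicity of the four fault frequencies and with the soundness and completeness of $\synt_p$. I would split the argument into \emph{relative completeness} --- if a universally complete test suite for $\faultform=\always(\faultkind)$ exists, then every call to \textsc{SyntLtlIterate} in \refalg{SyntLtlTest} returns a singleton, so \textsc{SyntLtlTest} returns a suite with exactly one strategy per output --- and \emph{soundness} --- any such returned suite is universally complete with respect to $\always(\faultkind)$.

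First I would fix the shape of the fault model. Under the hypothesis, $\faultform=\always(\faultkind)$ is either $\always(\psi)$ with $\psi$ over $\inp\cup\outp$, which does not reference $o_i'$ (case $\faultkind=\psi$), or $\always(o_i'\leftrightarrow\psi)$ (case $\faultkind=(o_i'\leftrightarrow\psi)$). Hence Proposition~\ref{th:worksstuck} applies in the first case and Proposition~\ref{th:worksfixed} in the second, and in both a universally complete test suite $\suite\subseteq\mooreset(\outp,\inp)$ with respect to $\faultform,\inp,\outp,\varphi$ exists \emph{if and only if} \refeq{theo_eq} holds with $\faultform=\always(\faultkind)$. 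I would also record the monotonicity along the fault-frequency chain $\always(\faultkind)\models\eventually\always(\faultkind)\models\always\eventually(\faultkind)\models\eventually(\faultkind)$: since for the lowest frequency $\eventually$ the synthesis objective $(\varphi[o_i\leftarrow o_i']\wedge\faultfreq(\faultkind))\rightarrow\neg\varphi$ is the strongest LTL formula and for the highest frequency $\always$ it is the weakest, realizability of that objective at a lower frequency implies realizability at every higher one; in particular the objective is realizable at \emph{some} of the four frequencies iff it is realizable at $\always$.

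For relative completeness, assume a universally complete suite exists. By the equivalence above, \refeq{theo_eq} holds, i.e.\ for every $o_i\in\outp$ there is $\strat\in\mooreset(\outp,\inp)$ with $\forall\sys\in\mealyset(\inp,\outp\cup\{o_i'\})\scope\trace(\strat,\sys)\models\bigl((\varphi[o_i\leftarrow o_i']\wedge\always(\faultkind))\rightarrow\neg\varphi\bigr)$. As spelled out under ``Basic idea'' in Section~\ref{sec:comp}, this is precisely the realizability condition decided by the call on Line~\ref{alg:SyntLtlIterate:2} of \refalg{SyntLtlIterate} for $\faultfreq=\always$; since $\synt_p$ is complete it does not return \unreal, so the \textbf{for}-loop of \textsc{SyntLtlIterate} returns a singleton for this $o_i$ (at the latest in its last iteration), and therefore \textsc{SyntLtlTest} returns $\suite=\{\strat_i\mid o_i\in\outp\}$. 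For soundness, take such a returned suite: each $\strat_i\in\mooreset(\outp,\inp)$ was produced by $\synt_p$ for some $\faultfreq_i\in\{\eventually,\always\eventually,\eventually\always,\always\}$, hence $\forall\sys\scope\trace(\strat_i,\sys)\models\bigl((\varphi[o_i\leftarrow o_i']\wedge\faultfreq_i(\faultkind))\rightarrow\neg\varphi\bigr)$. Because $\always(\faultkind)\models\faultfreq_i(\faultkind)$ holds on every trace, whenever a trace satisfies $\varphi[o_i\leftarrow o_i']\wedge\always(\faultkind)$ the implication still forces $\neg\varphi$, so $\strat_i$ witnesses \refeq{theo_eq} for $o_i$ with $\faultform=\always(\faultkind)$. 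As this holds for every $o_i$, \refeq{theo_eq} holds, and by (the proof of) Theorem~\ref{th:exists}, which exhibits the one-strategy-per-output collection of such witnesses as a suite satisfying \refeq{complete_eq}, $\suite$ is universally complete with respect to $\always(\faultkind)$. Combining the two directions gives the claim.

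The main obstacle I expect is keeping the quantifier bookkeeping straight between the two frequencies in play: \textsc{SyntLtlIterate} certifies the synthesis objective only at the \emph{lowest} workable frequency, whereas the theorem is stated for the \emph{highest} frequency $\always$, and it is the monotonicity implication $\always(\faultkind)\models\faultfreq_i(\faultkind)$ that bridges this gap. One must also be careful that $\synt_p$ has only partial information (the test strategy cannot observe $o_i'$), which is exactly why the ``only if'' direction needs Propositions~\ref{th:worksfixed} and~\ref{th:worksstuck} --- where this hidden signal has been eliminated --- rather than \reflem{det} and \reflem{impl} applied directly to \refeq{theo_eq}.
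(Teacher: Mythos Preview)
Your proposal is correct and follows essentially the same approach as the paper's proof: soundness via the monotonicity $\always(\faultkind)\models\faultfreq(\faultkind)$ combined with Theorem~\ref{th:exists}, and relative completeness via Propositions~\ref{th:worksfixed} and~\ref{th:worksstuck} to ensure $\synt_p$ succeeds at $\faultfreq=\always$. Your write-up is somewhat more explicit about the bookkeeping (in particular the bridge between the frequency $\faultfreq_i$ at which the strategy is actually found and the target frequency $\always$), but the underlying decomposition and the key lemmas invoked are identical.
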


\begin{proof}
Since $\always(\faultkind)$ implies $\faultfreq(\faultkind)$ for all $\faultfreq \in \{\eventually, \always \! \eventually, \eventually \! \always, \always\}$, Theorem~\ref{th:exists} and the guarantees of $\synt_p$ entail that the resulting test suite $\suite$ is universally complete with respect to $\faultform=\always(\faultkind)$ if $|\suite|=|\outp|$, i.e., if \textsc{SyntLtlTest} found a strategy for every output.  It remains to be shown that $|\suite|=|\outp|$ for $\faultkind=\psi$ or $\faultkind = (o_i' \leftrightarrow \psi)$ if a universally complete test suite for $\faultform=\always(\faultkind)$ exists: either Proposition~\ref{th:worksfixed} or Proposition~\ref{th:worksstuck} states that \refeq{theo_eq} holds with $\faultform=\always(\faultkind)$.  Thus, $\synt_p$ cannot return $\unreal$ in \textsc{SyntLtlIterate} with $\faultfreq = \always$, so $|\suite|$ must be equal to $|\outp|$ in this case.
\end{proof}

Theorem~\ref{th:alg} states that \textsc{SyntLtlTest} is not only sound but also complete for many interesting fault models such as stuck-at faults or permanent bit-flips.  For $\faultkind=\psi$, Theorem~\ref{th:alg} can even be strengthened to hold for all $\faultform=\faultfreq(\faultkind)$ with $\faultfreq \in \{\eventually, \always \! \eventually, \eventually \! \always, \always\}$.

\subsection{Extensions and Variants}
\label{sec:ext}

A test suite computed by \textsc{SyntLtlTest} for specification $\varphi$ and fault model $\faultform$ is universally complete and detects all faults with respect to $\varphi$ and $\faultform$ independent of the implementation and the concrete fault manifestation if the fault manifests at one of the observable outputs as illustrated in \reffig{tg}.

In this section, we discuss some alternatives and extensions of our approach to improve fault coverage and performance.

\paragraph{User-specified fault frequencies}
Besides the four fault frequencies ($\always$, $\eventually \!\always$, $\always \!\eventually$, and $\eventually$), other fault frequencies (with different precedences) may be of interest, e.g., if a specific time step is of special interest.  \refalg{SyntLtlIterate} supports full LTL and thus the procedure can be extended by replacing Line~\ref{alg:SyntLtlIterate:1} by ``\textbf{for each} $\faultfreq$ from $\faultfreqset$ in this order'', where $\faultfreqset$ is an additional parameter provided by the user.

\paragraph{Faults at inputs}
In the fault model in the previous section,  we only consider faults at the outputs.  However, considering SUTs that behave as if they would have read a faulty input is possible as well (by changing Line~\ref{alg:SyntLtlTest:0} in \refalg{SyntLtlTest} to ``\textbf{for each} $o\in \inp \cup \outp$ \textbf{do}'').  

\paragraph{Multiple faults}
Faults that occur simultaneously at multiple (inputs or) outputs $\{o_1, \ldots, o_k\}\subseteq \outp$ can be considered by computing a test strategy
\begin{align*}
\strat := \synt_p\Big(\outp \cup \{o_1', \ldots, o_k'\}, \inp, (\varphi[o_1\leftarrow o_1', \ldots, o_k\leftarrow o_k'] \wedge \bigwedge_{i=1}^{k} \faultform_i) \rightarrow \neg \varphi, \outp,\constr\Big),
\end{align*}
where the fault model $\faultform_i$ can be different for different outputs $o_i \in \{o_1, \ldots, o_k\}$.

\paragraph{Faults within a SUT}
If a fault manifests in a \emph{conditional fault} in a system implementation, a universally complete $\suite$ may not be able to uncover the fault (see \refex{morestrats}).

\begin{figure}[t]%
  \centering
  \subfloat{\begin{tikzpicture}[>=latex,->,auto,initial text={},initial distance=3mm]
\node[state,initial,inner sep=0] at  (0,0)    (S0) {$\mathsf{i}$};
\node[state,inner sep=0]         at  (1.5,0)  (S1) {$\neg \mathsf{i}$};
\node[]                          at  (0,-1.3) (SENTINEL) {};
\path
(S0) edge [bend left]  node[xshift=0mm,yshift=0mm] {$*$} (S1)
(S1) edge [bend left]  node[xshift=0mm,yshift=0mm] {$*$} (S0)
;
\end{tikzpicture}}%
  \qquad\qquad
  \subfloat{\begin{tikzpicture}[>=latex,->,auto,initial text={},initial distance=3mm]
\node[state,initial,inner sep=0]  at  (0,0)       (S0) {$$};
\node[state,inner sep=0]          at  (1.5,0.5)     (S1) {$$};
\node[state,inner sep=0]          at  (1.5,-0.5)     (S2) {$$};

\path
(S0) edge [ultra thick,sloped] node[xshift=3mm,yshift=0mm] {$i / o$} (S1);
\path
(S0) edge [sloped]  node[xshift=-5mm,yshift=-5mm] {$\neg i / \neg o$} (S2);
\path
(S1) edge [loop above]  node[xshift=5mm,yshift=-5mm] {$* / o$} (S1);
\path
(S2) edge [loop below]  node[xshift=5mm,yshift=5mm] {$* / \neg o$} (S2);
\end{tikzpicture}}%
  \caption{Test strategy $\strat_6$ and a faulty system implementation of the specification $\varphi = \always((i \leftrightarrow \nextt(\neg  i)) \rightarrow \nextt(o))$.}%
  \label{fig:strategyGswitchio}%
\end{figure}

\begin{example}\label{ex:morestrats}
Consider a system with input $\inp=\{i\}$, output $\outp=\{o\}$, and specification $\varphi = \always((i \leftrightarrow \nextt(\neg  i)) \rightarrow \nextt(o))$.  The specification enforces $o$ to be set to $\true$ whenever input $i$ alternates between $\true$ and $\false$ in consecutive time steps.  Consider a stuck-at-$0$ fault $\faultform = \always \! \eventually \neg o$ at the output $o$.  The test suite $\suite = \{\strat_6\}$ with the test strategy $\strat_6$ illustrated in \reffig{strategyGswitchio} (on the left) is universally complete with respect to $\faultform$.  The test strategy $\strat_6$ flips input $i$ in every time step and thus forces the system to set $o = \true$ in the second time step.  Now consider the concrete and faulty system implementation in \reffig{strategyGswitchio} (on the right) of $\varphi$.  The test strategy $\strat_6$, when executed, first follows the bold edge and then remains forever in the same state.  As a consequence, the fault in the system implementation, i.e., $o$ stuck-at-$0$, is not uncovered.  To uncover the fault, $i$ has to be set to $\false$ in the initial state.
\end{example}

Faults within a system implementation can be considered by computing more than one test strategy for a given test objective.  We extend \refalg{SyntLtlTest} to generate a bounded number~$b$ of test strategies by setting $\constr$~=~$\suite$ in Line~\ref{alg:SyntLtlTest:1} and enclosing the line by a \textbf{while}-loop that uses an additional integer variable~$c$ to count the number of test strategies generated per output $o_i$.
The \textbf{while}-loop terminates if no new test strategy could be generated or if $c$ becomes equal to $b$.
Note that this approach is correct in the sense that all computed test strategies are universally complete  with respect to the fault model $\faultfreq(\faultkind)$; however, in many cases it is more efficient to determine the lowest fault frequency first in Line~\ref{alg:SyntLtlTest:1} of Alg.~\ref{alg:SyntLtlIterate} and then generate multiple test strategies with the same (or higher) frequency by enclosing Line~\ref{alg:SyntLtlIterate:2} with the \textbf{while}-loop.

\paragraph{Test strategy generalization}
A synthesis procedure usually assigns concrete values to all variables in every state of the generated test strategy.  In many cases, however, not all assignments are necessary to enforce a test objective (see \refex{generalizeStratl}).

\begin{figure}[t]%
  \centering
  \subfloat{\begin{tikzpicture}[>=latex,->,auto,initial text={},initial distance=3mm]
\node[state,initial,inner sep=0]  at (0,0) (S0) {\begin{minipage}{1cm}\centering$\mathsf{r_1}$ \\$\neg \mathsf{r_2}$\end{minipage}};

\path
(S0) edge [loop right] node[xshift=-0mm,yshift=0mm]    {$*$}      (S0)
;
\end{tikzpicture}}%
  \qquad\qquad
  \subfloat{\begin{tikzpicture}[>=latex,->,auto,initial text={},initial distance=3mm]
\node[state,initial,inner sep=0]  at (0,0) (S0) {\begin{minipage}{1cm}\centering$\mathsf{r_1}$ \\$\mathsf{r_2}$\end{minipage}};

\path
(S0) edge [loop right] node[xshift=-0mm,yshift=0mm]    {$*$}      (S0)
;
\end{tikzpicture}}%
  \qquad\qquad
  \subfloat{\begin{tikzpicture}[>=latex,->,auto,initial text={},initial distance=3mm]
\node[state,initial,inner sep=0]  at (0,0) (S0) {\begin{minipage}{1cm}\centering$\mathsf{r_1}$ \end{minipage}};

\path
(S0) edge [loop right] node[xshift=-0mm,yshift=0mm]    {$*$}      (S0)
;
\end{tikzpicture}}%
  \caption{Test strategy $\strat_7$ on the left, $\strat_8$ in the middle and $\strat_9$ on the right.}%
  \label{fig:generalized strategies}%
\end{figure}

\begin{example}\label{ex:generalizeStratl}
Consider a system with inputs $\inp=\{r_1, r_2\}$ and outputs $\outp=\{g_1, g_2\}$, which implements the specification of a two-input arbiter
$\varphi = \always(r_1 \rightarrow \eventually g_1) \land \always(r_2 \rightarrow \eventually g_2) \land \always( \neg  g_1 \lor \neg  g_2)$, i.e., every request $r_i$ shall eventually be granted by setting $g_i$ to $\true$ and there shall never be two grants at the same time.  A valid test strategy $\strat_7$ that tests for a stuck-at-0 fault of signal $g_1$ from some point in time onwards may simply set $r_1=\true$ and $r_2=\false$ all the time (see \reffig{generalized strategies}). This forces the system in every time step to eventually grant this one request by setting $g_1 = \true$. Another valid test strategy $\strat_8$  sets $r_1=\true$ and $r_2=\true$ all the time  (see \reffig{generalized strategies}). Now the system has to grant both requests eventually. Both $\strat_7$ and $\strat_8$ test for the defined stuck-at-0 fault of signal $g_1$ from some point in time onwards but will likely execute different paths in the SUT. Thus, considering the more general strategy $\strat_9$  (see \reffig{generalized strategies}) that sets $r_1=\true$ all the time but puts no restrictions on the value of $r_2$, allows the tester to evaluate different paths in the SUT while still testing for the defined fault class.
\end{example}

\begin{algorithm}[tb]
\caption{\textsc{Generalize}: Generalize a test strategy.}
\label{alg:GeneralizeStrat}
\begin{algorithmic}[1]
\ProcedureRet{Generalize}
             {\inp, \outp, \varphi, o_i, \faultfreq, \faultkind, \strat}
             {A generalization of $\strat$}
  \For{\textbf{each} $q_i \in T$} \label{alg:GeneralizeStrat:1}
    \For{\textbf{each} $x_i \in \inalp$}  \label{alg:GeneralizeStrat:2}
      \State $\strat' := \text{remove assignment to } x_i \text{ from state } q_i \text{ in } \strat$ \label{alg:GeneralizeStrat:3} 
      \If{$\modelcheck l(T^\prime,  \bigl( \varphi[o_i \leftarrow o_i^\prime] \wedge \faultfreq(\faultkind)\bigr)  \rightarrow  \neg \varphi)$} \label{alg:GeneralizeStrat:4} 
      \State $\strat := \strat'$
      \EndIf
    \EndFor
  \EndFor
  \State \textbf{return} $\strat$
\EndProcedure
\end{algorithmic}
\end{algorithm}

The procedure in \refalg{GeneralizeStrat} generalizes a given test strategy $\strat$ by systematically removing variable assignments from states and employing a modelchecking procedure to ensure that the generalized test strategy still enforces the same test objective.  The procedure loops in Line~\ref{alg:GeneralizeStrat:1} over all states of $\strat$ and in Line~\ref{alg:GeneralizeStrat:2} over all inputs.  In Line~\ref{alg:GeneralizeStrat:3} the assignment to the input $x_i$ in a state is removed such that the corresponding variable becomes non-deterministic. If the resulting test strategy still enforce the test objective, then $\strat$ is replaced by its generalization.  Otherwise, the change is reverted.  \refalg{GeneralizeStrat} is integrated into \refalg{SyntLtlIterate} and applied in Line~\ref{alg:SyntLtlIterate:3} to generalize each generated test strategy.

Note that generalizing a test strategy is a a special way of computing multiple concrete test strategies, which was discussed in the previous section. However, generalization may fail when computing multiple strategies succeeds (by following different paths).

\paragraph{Optimization for full observability}
If we restrict our perspective to the case with no partial information, i.e., all signals are fully observable, we can employ the optimization discussed in Proposition~\ref{th:worksstuck} to improve the performance of test strategy generation.  In Line~\ref{alg:SyntLtlIterate:2} of \refalg{SyntLtlIterate} we drop a part of the assumption and simplify the synthesis step to $\strat_i := \synt\bigl(\outp, \inp, \faultfreq(\faultkind) \rightarrow \neg  \varphi, \constr \bigr)$ for cases in which $\faultkind$ does not refer to a hidden signal $o_i'$.  Also, for a fault model $\faultform$ that describes a fault of kind $\faultkind = (o_i' \leftrightarrow \psi)$, where $\psi$ is an LTL formula over $\inp$ and $\outp$, we can drop the part of the assumption according to Proposition~\ref{th:worksfixed} if $\faultfreq=\always$. This simplifies Line~\ref{alg:SyntLtlIterate:2} of \refalg{SyntLtlIterate} to $\strat_i := \synt\bigl(\outp, \inp, \varphi[o_i\leftarrow \psi] \rightarrow \neg  \varphi,\constr \bigr)$. These simplifications, moreover, no longer require a synthesis procedure with partial information and thus, a larger set of synthesis tools is supported.

\paragraph{Mutating the specification}
We can also synthesize adaptive test strategies that would uncover bugs where the SUT implements a mutated (i.e., slightly modified) specification $\varphi'$ instead of $\varphi$ by calling $\strat := \synt(\outp, \inp, \varphi' \rightarrow \neg \varphi,\constr)$.  The implication requires the original specification $\varphi$ to be violated under the assumption that the mutated specification $\varphi'$ has been implemented in the SUT.  This variant does not require partial information synthesis.

\paragraph{Other specification formalisms}
We worked out our approach for LTL, but it works for other languages if (1) the language is closed under Boolean connectives $(\wedge, \neg)$, (2) the desired fault models are expressible, and (3) a synthesis procedure (with partial information) is available.  These prerequisites do not only apply to many temporal logics but also to various kinds of automata over infinite words.

\section{Case Study}
\label{sec:experimentals}

To evaluate our approach, we apply it in a case study on a real component of a satellite that is currently under development. We first present the system under test and specify a version of the respective component in LTL. Using this specification, we compute a set of test strategies and evaluate the test suite on a real implementation. Additional case studies can be found in~\cite{BloemKPR16}.

\subsection{\eucropis FDIR Specification}
\label{sec:eucropisfdir}

An important task of each space and satellite system is to maintain its health state and react on failure.  In modern space systems this task is encapsulated in the \emph{Fault Detection, Isolation, and Recovery} (FDIR) component, which collects the information from all relevant sensors and on-board computers, analyzes and assess the data in terms of correctness and health, and initiates recovery actions if necessary.  The FDIR component is organized hierarchically in multiple levels~\cite{TipaldiB15} with the overall objective of maximizing the system life-time and correct operation.

In this section, we focus on system-level FDIR and present the high-level abstraction of a part of the FDIR mechanisms used in the \eucropis satellite mission as a case-study for adaptive test strategy generation.  On the system-level, the FDIR mechanism deals with coarse-granular anomalies of the system behavior like erroneous sensor data or impossible combinations of signals.  Likewise the recovery actions are limited to restarting certain sub-systems, switching between redundant sub-systems if available, or switching into the satellite's safe mode.  The FDIR component is highly safety- and mission-critical; if recovery on this level fails, in many cases the mission has to be considered lost.

\begin{figure}[t]
  \centering
  \includegraphics[width=1.0\textwidth]{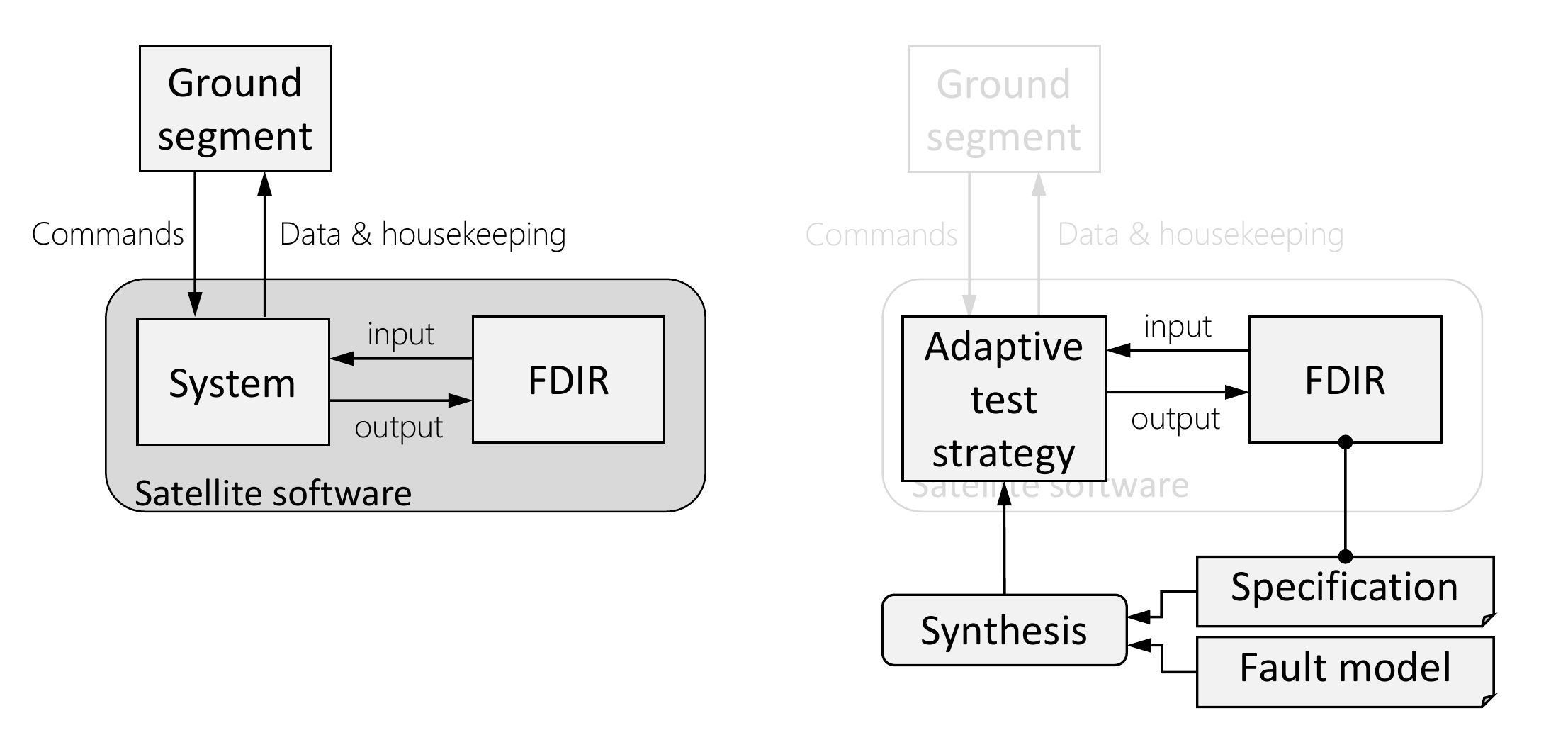}
  \caption{FDIR in practice (left) and the intended test setup (right).}
  \label{fig:fdirinpractice}
\end{figure}

\begin{figure}[t]
  \centering
  \includegraphics[width=1.0\textwidth]{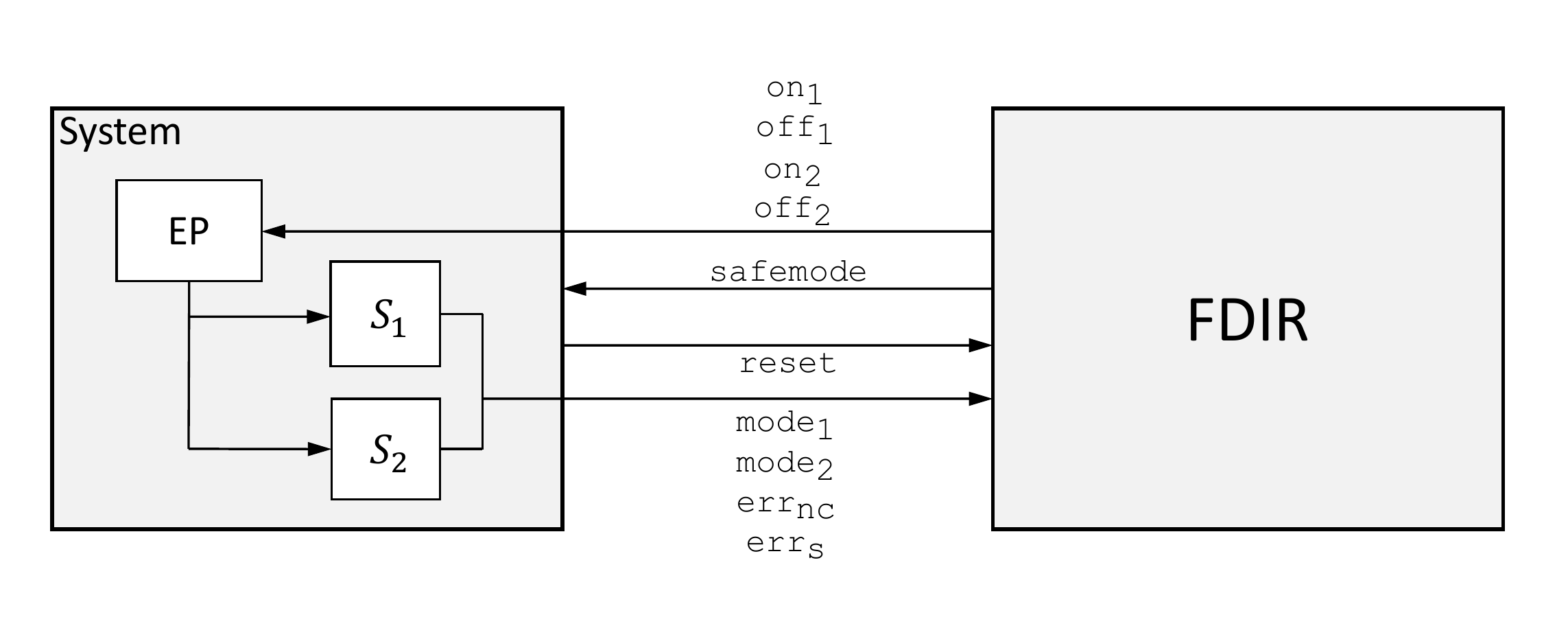}
  \caption{High-level overview of the satellite software of \eucropis.}
  \label{fig:fdir overview}
\end{figure}

\paragraph{\eucropis FDIR}
In  \reffig{fdirinpractice} we illustrate where the FDIR component for the magnetic torquers of the \eucropis on-board computing system is placed in practice and in  \reffig{fdir overview}, we give a high-level overview of the FDIR component and its environment.   The FDIR component regularly obtains housekeeping information from two redundantly-designed control units, $S_1$ and $S_2$, which control the magnetic torquers of the satellite, and interacts with them via the electronic power system, EP.  The control units $S_1$ and $S_2$ have the same functionality, but only one of them is active at any time.  The other control unit serves as a backup that can be activated if necessary.  The FDIR component signals the activation (or deactivation) of a control unit to the EP which regulates the power supply.

We distinguish two types of errors, called \emph{non-critical error} and \emph{severe error}, signaled to the FDIR component via housekeeping information.  In case of a non-critical error, two recovery actions are allowed.  Either the erroneous control unit is disabled for a short time and enabled afterwards again or the erroneous control unit is disabled and the redundant control unit is activated to take over its task.  In case of the severe error, however, only the latter recovery action is allowed, i.e., the erroneous control unit has to be disabled and the redundant control unit has to be activated.  If this happens more than once and the redundant control unit as well shows erroneous behavior, the FDIR component initiates a switch of the satellite mode into safe mode.  The safe mode is a fall-back satellite mode designed to give the operators on ground the maximum amount of time to analyze and fix the problem.  It is only invoked once a problem cannot be solved on-board and requires input from the operators to restore nominal operations.

\begin{table}[t]
  \centering
  \caption{Descriptions of inputs and outputs of the FDIR component.}
  \label{tab:signals}
  \begin{tabularx}{\textwidth}{p{4cm} p{7cm}}
    \toprule
    \multicolumn{1}{c}{\bf Boolean variable} &
    \multicolumn{1}{c}{\bf Description} \\
    \midrule
    \nommode     & $\true$ iff $S_1$ is activated \\
    \redmode     & $\true$ iff $S_2$ is activated \\
    \normerr     & $\true$ iff a non-critical error is signaled by $S_1$ or $S_2$ \\
    \criterr     & $\true$ iff a severe error is signaled by $S_1$ or $S_2$ \\
    \reset       & $\true$ iff the FDIR component is reset \\
    \midrule
    \nomon       & $\true$ iff $S_1$ shall be switched on \\
    \nomoff      & $\true$ iff $S_1$ shall be switched off \\
    \redon       & $\true$ iff $S_2$ shall be switched on \\
    \redoff      & $\true$ iff $S_2$ shall be switched off \\
    \safemode    & $\true$ iff the FDIR component initiates the safemode of the satellite \\
    \midrule
    \lastupisnom & $\true$ if the last active system was $S_1$ and $\false$ if the last active system was $S_2$ \\
    \allowswitch & $\true$ iff a switch of $S_1$ to $S_2$ or $S_2$ to $S_1$ is allowed \\
    \bottomrule
  \end{tabularx}
\end{table}

\paragraph{LTL specification}
We model the specification of the FDIR component in LTL.  Let $\inp_{FDIR}$ = \{$\nommode$, $\redmode$, $\normerr$, $\criterr$, $\reset$\} and $\outp_{FDIR}$ = \{$\nomon$, $\nomoff$, $\redon$, $\redoff$, $\safemode$\} be the Boolean variables corresponding to the input signals and the output signals of the FDIR component, respectively.

These Boolean variables are abstractions of the real hardware/software implementation.  The values of the Boolean variables are automatically extracted from the housekeeping information which is periodically collected from EP ($\nommode$, $\redmode$) and $S_1$ or $S_2$ ($\normerr$, $\criterr$).  The two error variables encompass multiple error conditions (e.g. communication timeouts, invalid responses, electrical errors like over-current or under-voltage, etc.) which are detected by the sub-system.  The $\reset$ variable corresponds to a telecommand sent from ground to the FDIR component.  For the output direction the values of the variables are used to generate commands which are sent to the EP or the satellite mode handling component.  Additionally, we use the auxiliary Boolean variables $\outp^\prime$~=~\{$\lastupisnom$, $\allowswitch$\} to model state information on specification level which does not correspond to any real signals in the system.  These auxiliary variables serve as unobservable outputs of the FDIR component.  In \reftab{signals}, we summarize the Boolean variables involved in the specification and their meaning.

The complete LTL specification of the FDIR component consists of the assumptions A1-A6 and the guarantees G1-G13.  All properties are listed in \reftab{fdir}, expressing the following intentions:

\begin{itemize}
\item[A1] Whenever both systems are off, then there is no running system that can have an error. Thus, the error signals have to be low as well.
\item[A2] The error signals are mutual exclusive. If the environment enforces a reset then both error signals have to be low, because we assume that ground control has taken care of the errors.
\item[A3] After a reset enforced by the environment, one of the two systems has to be running and the other has to be off.
\item[A4] Whenever the FDIR component sends $\nomon$, we assume that in the next time step system number one is running ($\nommode$) and the state of the second system ($\redmode$) does not change. The same assumption applies analogously for $\redon$.
\item[A5] Whenever the FDIR component sends $\nomoff$, we assume that in the next time step system number one is off ($\neg \nommode$) and the state of the second system ($\redmode$) does not change. The same assumption applies analogously for $\redoff$.
\item[A6] We assume that the environment, more specifically the electronic power unit, is not immediately free to change the state of the systems when there is no message from the FDIR component. It has to wait for one more time step (with no messages of the FDIR component).
\item[G1]This guarantee stores which system was last activated by the FDIR component.
\item[G2] We require the signals $\nomon$, $\nomoff$, $\redon$ and $\redoff$ to be mutually exclusively set to high.
\item[G3] Whenever both systems are off, then the FDIR component eventually requests to switch on one of the systems ($\nomon$, $\redon$) or activates $\safemode$ or observes a $\reset$.
\item[G4] We restrict the FDIR component to not enter $\safemode$ as long as the component can switch to the backup system.
\item[G5] The FDIR component must not request to switch on one of the systems ($\nomon$, $\redon$) as long as one of the systems is running.
\item[G6] Whenever the FDIR component is not allowed anymore to switch to the backup system, then it must not request to switch the backup system on.
\item[G7] Once the FDIR component switches to the backup system it is not allowed anymore to switch again (unless the environment performs a reset, see G9).
\item[G8] As long as the FDIR component only restarts the same system it is still allowed to switch in the future.
\item[G9] A $\reset$ by the environment allows the FDIR component again to switch to the backup system if required.
\item[G10] Whenever the FDIR component is in $\safemode$ it must not request to switch-on one of the systems  ($\nomon$,$\redon$).
\item[G11] Once a switch is not allowed anymore and the environment does not perform a reset, then the switch is also not allowed in the next time step.
\item[G12] Whenever the FDIR component observes a server error ($\criterr$), it must eventually switch to the backup system or activate $\safemode$ unless the environment performs a $\reset$ or the error disappears by itself (without restarting the system).
\item[G13] Whenever the FDIR component observes a non-critical error ($\normerr$), it must eventually switch to the backup system or activate $\safemode$ or the error disappears (restarting the currently running system is allowed).
\end{itemize}

\begin{table}[p]
  \centering
  \caption{Temporal  specification of system-level FDIR component in LTL.}
  \label{tab:fdir}
  \begin{tabularx}{\textwidth}{Z p{10.6cm}}
    \toprule
    \multicolumn{2}{c}{\bf Assumptions A1--A6} \\
    \midrule
    A1 & $\always (\neg  \redmode \land \neg  \nommode \rightarrow \neg  \normerr \land \neg  \criterr )$ \\
    \midrule
    A2 & $\always (\neg \normerr \vee \neg \criterr) \wedge \always (\reset \rightarrow \neg \normerr \land \neg \criterr)$ \\
    \midrule
    A3 & $\always (\reset \rightarrow \nextt (\redmode \oplus \nommode))$ \\
    \midrule
    A4 & $\begin{aligned}[t]
          \always ( & \neg \nommode \land \nomon \land \neg \nomoff \land \neg \redon \land \neg \redoff \land \neg \reset \land \neg \safemode \rightarrow \\
                    & \nextt (\nommode) \land (\redmode \leftrightarrow \nextt (\redmode)))
          \end{aligned}$ \\
       & $\begin{aligned}[t]
          \always ( & \neg \redmode \land \neg \nomon \land \neg \nomoff \land \redon \land \neg \redoff \land \neg \reset \land \neg \safemode \rightarrow \\
                    & \nextt (\redmode) \land (\nommode \leftrightarrow  \nextt (\nommode)))
          \end{aligned}$ \\
    \midrule
    A5 & $\begin{aligned}[t]
          \always ( & \nommode \land \neg \nomon \land \nomoff \land \neg \redon \land \neg \redoff \land \neg \reset \land \neg \safemode \rightarrow \\
                    & \nextt (\neg \nommode) \land (\redmode \leftrightarrow \nextt(\redmode)) )
          \end{aligned}$ \\
       & $\begin{aligned}[t]
          \always ( &\redmode \land \neg \nomon \land \neg \nomoff \land \neg \redon \land \redoff \land \neg \reset \land \neg \safemode \rightarrow \\
                    & \nextt (\neg \redmode) \land (\nommode \leftrightarrow \nextt(\nommode)) )
          \end{aligned}$ \\
    \midrule
    A6 & $\begin{aligned}[t]
          \always ( &(\neg  (\neg \redon \land \neg \nomoff \land \neg \nomon \land \neg \redoff) \land \nextt(\neg \redon \land \neg \nomoff \land \neg \nomon \land \neg \redoff) \land \\
                    &(\neg \reset \land \nextt(\neg \reset) \land \neg \safemode \land \nextt(\neg \safemode)) \rightarrow \\
                    &\nextt ((\redmode \leftrightarrow \nextt(\redmode)) \land (\nommode \leftrightarrow \nextt(\nommode) ) )
          \end{aligned}$ \\
    \midrule
    \multicolumn{2}{c}{\bf Guarantees G1--G13} \\
    \midrule
    G1 & $\always((\nomon \land \neg \redon) \rightarrow (\nextt(\lastupisnom)))$ \\
       & $\always((\neg \nomon \land \redon) \rightarrow (\nextt(\neg \lastupisnom)))$ \\
       & $\always( (\neg \nomon \land \neg \redon) \rightarrow (\lastupisnom \leftrightarrow \nextt(\lastupisnom)))$ \\
    \midrule
    G2 & $\always(\nomon \rightarrow \neg \nomoff \land \neg \redon \land \neg \redoff)$ \\
       & $\always(\nomoff \rightarrow \neg \nomon \land \neg \redon \land \neg \redoff)$ \\
       & $\always(\redon \rightarrow \neg \nomon \land \neg \nomoff \land \neg \redoff)$ \\
       & $\always(\redoff \rightarrow \neg \nomon \land \neg \redon \land \neg \nomoff)$ \\
    \midrule
    G3 &
    $\always (\neg \redmode \land \neg \nommode \rightarrow \eventually(\reset \lor \redon \lor \nomon \lor \safemode))$ \\
    \midrule
    G4 & $\always(\allowswitch \rightarrow \neg \safemode)$ \\
    \midrule
    G5 & $\always((\redmode \lor \nommode) \rightarrow \neg \nomon \land \neg \redon)$ \\
    \midrule
    G6 & $\always(\neg \allowswitch \land \lastupisnom \rightarrow \neg \redon)$ \\
     & $\always(\neg \allowswitch \land \neg \lastupisnom \rightarrow \neg \nomon)$ \\
    \midrule
    G7 & $\always (\neg \reset \land \allowswitch \land \lastupisnom \land \redon \rightarrow \nextt (\neg \allowswitch) )$ \\
       & $\always (\neg \reset \land \allowswitch \land \neg \lastupisnom \land \nomon \rightarrow \nextt (\neg \allowswitch) )$ \\
    \midrule
    G8 & $\always ((\allowswitch \land  \neg (((\lastupisnom \land \redon) \lor (\neg \lastupisnom \land \nomon)))) \rightarrow \nextt (\allowswitch))$ \\
    \midrule
    G9 & $\always (\reset \rightarrow \nextt (\allowswitch))$ \\
    \midrule
    G10 & $\always(\safemode \rightarrow (\neg \nomon \land \neg \redon))$ \\
    \midrule
    G11 & $\always( \neg \allowswitch \land  \neg \reset \rightarrow \nextt (\neg \allowswitch))$ \\
    \midrule
    G12 & $\begin{aligned}[t]
           \always( &(\criterr \land \nommode\land \neg \reset) \rightarrow \\
                    &\eventually(\reset \lor \safemode \lor \redmode \lor (\nommode \until (\nommode \land \neg \criterr))))
          \end{aligned}$ \\
        & $\begin{aligned}[t]
           \always( &(\criterr \land \redmode\land \neg \reset) \rightarrow \\
                    &\eventually(\reset \lor \safemode \lor \nommode \lor (\redmode \until (\redmode \land \neg \criterr))))
          \end{aligned}$ \\
    \midrule
    G13 & $\always((\normerr \land  \nommode \land \neg \reset) \rightarrow \eventually(\reset \lor \safemode \lor \redmode \lor (\nommode \land \neg \normerr)))$ \\
        & $\always((\normerr \land  \redmode \land \neg \reset) \rightarrow \eventually(\reset \lor \safemode \lor \nommode \lor (\redmode \land \neg \normerr)))$ \\
    \bottomrule
  \end{tabularx}
\end{table}

\subsection{Experimental Results}

The test strategy computation from the specification is independent of the implementation.  Thus, we first present the experimental results of the strategies derived from the LTL specification of the FDIR component given in ~\reftab{fdir}, then we execute and evaluate the computed strategies on the implementation of the specification in the system of the \eucropis satellite.

\subsubsection{Test strategy computation}
\label{sec:stratcomp}

\paragraph{Experimental setting}
All experiments for computing the test strategies are conducted in a virtual machine with a 64 bit Linux system using a single core of an Intel i5 CPU running at $2.60$\,GHz.  We use the synthesis procedure \party~\cite{KhalimovJB13} as black-box, which implements SMT-based bounded synthesis for full LTL and, thus, we call our tool \partystrategy.

\begin{table}[t]
  \centering
  \caption{Results for the FDIR specification.  The suffix ``k'' multiplies by $10^3$.}
  \label{tab:fdir:stratcalc}
  \newcommand{\GF}{$\always\!\eventually$}
  \newcommand{\FG}{$\eventually\! \always$}
  \begin{tabularx}{\textwidth}{p{1cm}p{.8cm}p{1.8cm}ZZZZ}
    \toprule
    \multicolumn{2}{c}{\bf Fault} &
    \multicolumn{1}{c}{\bf $o_i$} &
    \multicolumn{1}{Z}{\bf $\faultfreq$} &
    \multicolumn{1}{Z}{\bf $|\strat|$} &
    \multicolumn{1}{Z}{\bf Time} &
    \multicolumn{1}{Z}{\bf Peak Memory} \\
    &&&&&
    \multicolumn{1}{c}{[s]} &
    \multicolumn{1}{c}{[MB]} \\
    \midrule
    \multirow{3}{*}{S-a-0}
    &
    &  \nomon    & \FG & 4 &   1.2k & 400 \\
    && \nomoff   & \FG & 3 &    517 & 396 \\
    && \safemode & \FG & 4 &    934 & 324 \\
    \midrule
    \multirow{3}{*}{S-a-1}
    &
    &  \nomon    & \GF & 4 &    438 & 222 \\
    && \nomoff   & \FG & 4 &    753 & 378 \\
    && \safemode & \GF & 3 &    169 & 192 \\
    \midrule
    \multirow{3}{*}{Bit-Flip}
    &
    &  \nomon    & \GF & 4 &   26k & 3.6k \\
    && \nomoff   & \FG & 4 & 98.9k & 4.3k \\
    && \safemode & \GF & 3 & 13.1k & 4.3k \\
    \bottomrule
  \end{tabularx}
\end{table}

\paragraph{Test strategy computation}
From the previously described LTL specification, we compute test strategies for the outputs \nomon, \nomoff and \safemode of the FDIR component considering the fault models stuck-at-0, stuck-at-1, and bit-flip with the lowest possible fault frequencies.  These are general fault assumptions and cover faults where the specification is violated with this signal being high (stuck-at-1), faults where the specification is violated with this signal being low (stuck-at-0) and faults where the specification is violated with this signal having the wrong polarity (bit-flip).  We do not synthesize test strategies for the outputs \redon and \redoff because they behave identical to \nomon and \nomoff, respectively, if the role of $S_1$ and $S_2$ are mutually interchanged.  For synthesizing test strategies, both, the bound for the maximal number of states of a test strategy and the bound for the maximal number of test strategies, are set to four. We chose the bound to be four, because for this bound there exist strategies for all our chosen fault models and output signals. The size for the maximum number of strategies per variable and fault model is set arbitrarily to four and could also be set to a different value.

In \reftab{fdir:stratcalc}, we list the time and memory consumption for synthesizing the test strategies with our synthesis tool \partystrategy. The more freedom there is for implementations of the specification, the harder it becomes to compute a strategy. The search for strategies that are capable of detecting a bit-flip is the most difficult one as we cannot make use of our optimization for full observability of the output signals.  For all signals with a stuck-at-0 fault and for the \nomoff signal with one of the other two faults we are able to derive test strategies that can detect the fault if it is permanent from some point onwards. For the signals \nomon and \safemode we are able to derive strategies for stuck-at-1 faults and bit-flips also at a lower frequency, i.e., we can detect those faults also if they occur at least infinitely often.

\paragraph{Illustration of a computed strategy}
We illustrate and explain one derived strategy in detail. The strategy derived for the signal \safemode being stuck-at-0 computed with \partystrategy consists of four states. \reffig{FDIRsatmodesafeG0} illustrates the strategy. In the first state (state 0) we have the first system running (\nommode) and set the \normerr flag, i.e., we raise a non critical error that requires the component to restart until the error is gone or to switch to the other system. We loop in this state until the FDIR component, if it behaves according to the specification, switches off the running system. In the next state we (state 1) do not set any input and wait for the FDIR component to eventually switch on one of the systems. If the component switches on the same system, then we go back to the previous state (state 0), if it switches on the other system we go into the next state (state 3).  In this state we have the second system running (\redmode) and set again the \normerr flag, i.e., we again raise a non critical error.  We loop in this state until the FDIR component reacts and, if it conforms to the specification, switches off the running system.  Continuing according to the strategy we always raise a non critical error whatever system the FDIR component activates. Eventually the FDIR component has to activate \safemode or violate the specification.  State 2 is only entered when the FDIR violates G5. In this state, it is irrelevant how the test strategy behaves (as long as the assumptions are satisfied) because the specification has already been violated (which is easy to detect during test execution).

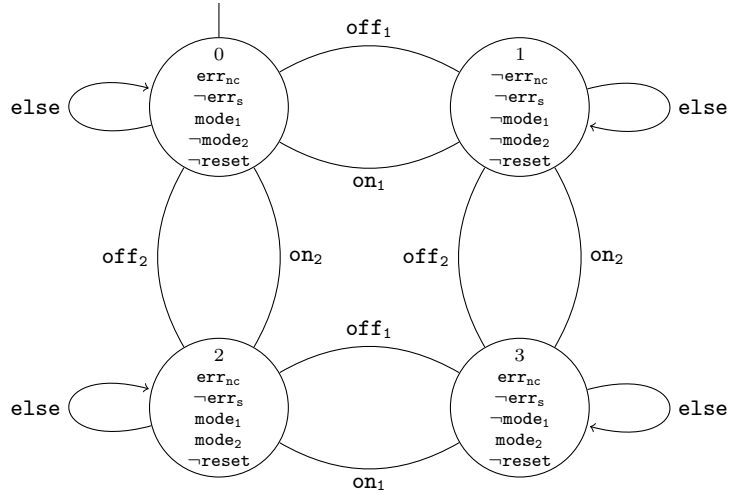
\begin{figure}[t]
  \centering
  \begin{tikzpicture}[auto,node distance=\nd]

\node at (0,5.5) (INIT) {};
  
\node[state,align=center,inner sep=0]  at  (0,4) (S0)
{\begin{minipage}{1cm}\scriptsize\centering 0 \\ \normerr \\       $\neg\criterr$ \\      \nommode \\ $\neg\redmode$ \\ $\neg\reset$\end{minipage}};

\node[state,align=center,inner sep=0]          at  (4,0) (S1)
{\begin{minipage}{1cm}\scriptsize\centering 3 \\ \normerr \\       $\neg\criterr$ \\ $\neg\nommode$ \\      \redmode \\ $\neg\reset$\end{minipage}};
     
\node[state,align=center,inner sep=0]          at  (0,0) (S2)
{\begin{minipage}{1cm}\scriptsize\centering 2 \\ \normerr \\       $\neg\criterr$ \\      \nommode \\       \redmode \\ $\neg\reset$\end{minipage}};
     
\node[state,align=center,inner sep=0]          at  (4,4) (S3)
{\begin{minipage}{1cm}\scriptsize\centering 1 \\ $\neg\normerr$ \\ $\neg\criterr$ \\ $\neg\nommode$ \\ $\neg\redmode$ \\ $\neg\reset$\end{minipage}};

\path
(INIT) edge (S0)
(S0) edge [align=center,loop left]   node[xshift=0mm,yshift=0mm] {{\tt else}} (S0)
(S0) edge [align=center,bend left]   node[xshift=0mm,yshift=0mm] {\redon}     (S2)
(S0) edge [align=center,bend left]   node[xshift=0mm,yshift=0mm] {\nomoff}    (S3)
(S1) edge [align=center,bend left]   node[xshift=0mm,yshift=0mm] {\nomon}     (S2)
(S1) edge [align=center,bend left]   node[xshift=0mm,yshift=0mm] {\redoff}    (S3)
(S1) edge [align=center,loop right]  node[xshift=0mm,yshift=0mm] {{\tt else}} (S1)
(S2) edge [align=center,loop left]  node[xshift=0mm,yshift=0mm]  {{\tt else}} (S2)
(S2) edge [align=center,bend left]   node[xshift=0mm,yshift=0mm] {\nomoff}    (S1)
(S2) edge [align=center,bend left]   node[xshift=0mm,yshift=0mm] {\redoff}    (S0)
(S3) edge [align=center,loop right]  node[xshift=0mm,yshift=0mm] {{\tt else}} (S3)
(S3) edge [align=center,bend left]   node[xshift=0mm,yshift=0mm] {\redon}     (S1)
(S3) edge [align=center,bend left]   node[xshift=0mm,yshift=0mm] {\nomon}     (S0)
;
\end{tikzpicture}
  \caption{Test strategy that tests for a stuck-at-0 fault of signal $\safemode$.}
  \label{fig:FDIRsatmodesafeG0}
\end{figure}

\begin{figure}[t]
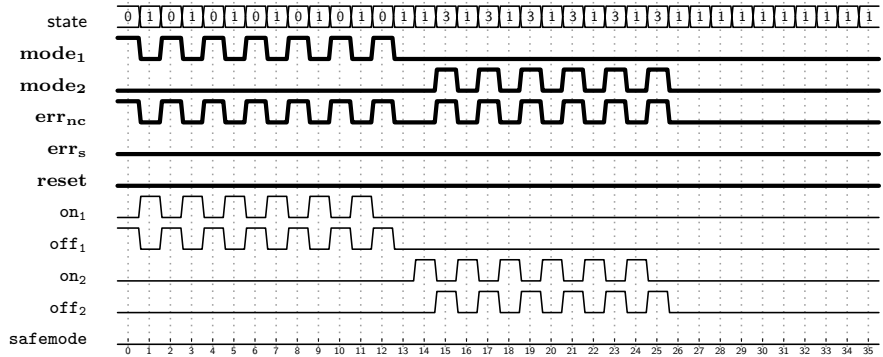

  \centering
  \begin{tikztimingtable}[>=latex,timing/dslope=0.1,timing/.style={x=2ex,y=2ex},x=2ex,timing/rowdist=3ex,timing/name/.style={font=\sffamily\scriptsize},-]
       state & D{0} D{1} D{0} D{1} D{0} D{1} D{0} D{1} D{0} D{1} D{0} D{1} D{0} D{1} D{1} D{3} D{1} D{3} D{1} D{3} D{1} D{3} D{1} D{3} D{1} D{3} D{1} D{1} D{1} D{1} D{1} D{1} D{1} D{1} D{1} D{1}\\
       $\bf mode_1$   & [ultra thick] 1H 1L 1H 1L 1H 1L 1H 1L 1H 1L 1H 1L 1H 17L 1L 5L \\
       $\bf mode_2$   & [ultra thick] 15L 1H 1L 1H 1L 1H 1L 1H 1L 1H 1L 1H 4L 1L 5L \\
       $\bf err_{nc}$  & [ultra thick] 1H 1L 1H 1L 1H 1L 1H 1L 1H 1L 1H 1L 1H 2L 1H 1L 1H 1L 1H 1L 1H 1L 1H 1L 1H 4L 1L 5L \\
       $\bf err_s$    & [ultra thick] 30L 1L 5L \\
       $\bf reset$    & [ultra thick] 30L 1L 5L \\
       $\tt on_1$     & 1L 1H 1L 1H 1L 1H 1L 1H 1L 1H 1L 1H 18L 1L 5L \\
       $\tt off_1$    & 1H 1L 1H 1L 1H 1L 1H 1L 1H 1L 1H 1L 1H 17L  1L 5L \\
       $\tt on_2$     & 14L 1H 1L 1H 1L 1H 1L 1H 1L 1H 1L 1H 5L 1L 5L \\
       $\tt off_2$    & 15L  1H 1L 1H 1L 1H 1L 1H 1L 1H 1L 1H 4L 1L 5L \\
       $\tt safemode$ & 30L 1L 5L \\
       \extracode
       \begin{pgfonlayer}{background}
         \foreach \n in {0,1,...,35}
         \draw (0.5+\n,-18.5+.1) -- +(0,-.2)
         node [below,inner sep=2pt] {\scalebox{.75}{\tiny\n}}[-];
         \begin{scope}[semitransparent ,semithick]
           \vertlines[darkgray,dotted,-]{0.5,1.5 ,...,35.5}
         \end{scope}
       \end{pgfonlayer}
\end{tikztimingtable}
  \caption{Execution trace from a faulty system under the strategy that tests for a stuck-at-0 fault of signal $\safemode$. Bold signals are controlled by the strategy.}
  \label{fig:FDIRExecsatmodesafeStuck0}
\end{figure}

\subsubsection{Test strategy evaluation}
\label{sec:strateval}

\paragraph{Test setting}
In the \eucropis satellite the FDIR component is implemented in C++.  The implementation for the magnetic torquer FDIR handling is not an exact realization of the specification in \reftab{fdir} but extends it by allowing commands to the EP to be lost (e.g. due to electrical faults).  This is accommodated by adding timeouts for the execution of the switch-on/off commands and reissuing the commands if the timeout is triggered.

The implementation is designed with testability and portability in mind and uses an abstract interface to access other sub-systems of the satellite.  This allows to exchange the used interface with a set of test adapters which connect to the signals generated by the test strategies.  As we are only interested in the functional properties of the implementation, we can run the code on a normal Linux system, instead of the microprocessor which is used in the satellite.  This gives access to all Linux based debugging and test tools and allows us to use {\tt gcov} to measure the line and branch coverage of the source code.

A time step of a test run consists of the following operations: request values for the input variables $\inp_{FDIR}$ from the test strategy; feed the values to the test adapter from which they are read by the FDIR implementation; run the FDIR implementation for one cycle; extract the output values $\outp_{FDIR}$ from the test adapter and feed them back to the test strategy to get new input values.  For each time step the execution trace is recorded, i.e., the values assigned to the inputs $\inp_{FDIR}$ and outputs $\outp_{FDIR}$ of the FDIR component.

\paragraph{Mutation testing}
We apply mutation analysis to assess the effectiveness, i.e., fault finding abilities, of a test suite.  A test suite \emph{kills} a mutant program $M$ if it contains at least one test strategy that, when executed on $M$ and the original program $P$, produces a trace where at least one output of $M$ differs in at least one time step from the respective output of $P$ (for the same input sequence).  A mutant program $M$ is \emph{equivalent} to the original program $P$ if $M$ does not violate the specification. For our evaluation we manually identify and remove equivalent mutants.

We generate mutant programs of the C++ implementation of the FDIR component by systematically introducing the following four mutations in each line: 1)~deletion of the line, 2)~replacement of {\tt true} with {\tt false} or {\tt false} with {\tt true}, 3)~replacement of {\tt ==} with {\tt !=} or {\tt !=} with {\tt ==}, and 4)~replacement of {\tt \&\&} with {\tt ||} or {\tt ||} with {\tt \&\&}.  In total, $198$~mutant programs are generated.  We use the GNU compiler {\tt gcc} to remove all mutant programs which do not compile and thus not conform to the C++ programming language.  Also all mutant programs which fail during runtime e.g. by raising a segmentation fault are removed.  We analyzed the remaining $96$ mutants manually and identified $23$ mutants that are correct with respect to the specification, i.e., equivalent mutants. Thus, $73$ mutants violate the specification. Moreover, $11$ of these $73$ mutants can only violate the specification if the $\nomoff$ and $\redoff$ commands can fail, which contradicts our assumptions on the EP unit. We keep those mutants to check whether the strategies can kill them nevertheless. Next, we executed all test strategies on the mutant programs for $80$ time steps each and log the corresponding execution traces.

From the $73$ mutants that violate the specification, our strategies all together are able to kill $52$, i.e., we achieve a mutation score of 71.23\%.  If we do not take the $11$ mutants into account that violate our assumptions for the test strategy generation, then the mutation score increases to 80.65\%.  We illustrate in \reffig{FDIRExecsatmodesafeStuck0} the execution of the test strategy from \reffig{FDIRsatmodesafeG0} on a mutant. This strategy aims for revealing a stuck-at-0 fault of signal $\safemode$. The test strategy first forces the FDIR component to eventually switch to the backup system. The switch happens in time step 14 after several restarts of the system. Then the strategy forces the FDIR component to eventually activate $\safemode$. However, this mutant is faulty and instead of activating $\safemode$ the system remains silent from time step 26 onwards. Thus, violating guarantee G3\footnote{Given that the user has decided that we have waited long enough for $\safemode$ to become true.}.

\begin{table}[t]
  \caption{Mutation coverage by fault models and signal when executing all four derived strategies.}
  \label{tab:fdir:mutcov}
  \centering
  \begin{tabularx}{\textwidth}{p{3cm} ZZZ Z}
    \toprule
    \multicolumn{1}{c}{\bf Output} &
    \multicolumn{4}{c}{\bf Fault Model} \\
    \cmidrule(lr){2-5}
    &
    \multicolumn{1}{c}{\bf S-a-0} &
    \multicolumn{1}{c}{\bf S-a-1} &
    \multicolumn{1}{c}{\bf Bit-Flip} &
    \multicolumn{1}{c}{\bf All} \\
    &
    \multicolumn{1}{c}{[\%]} &
    \multicolumn{1}{c}{[\%]} &
    \multicolumn{1}{c}{[\%]} &
    \multicolumn{1}{c}{[\%]} \\
    \midrule
    \nomon       & 65.75 &  39.73 & 5.48 & 65.75 \\
    \nomoff      &  5.48 &   4.11 & 9.59 &  9.59 \\
    \safemode    & 61.64 &   6.85 & 6.85 & 61.64 \\
    \midrule
    {\bf All}    & 71.23  & 39.73 & 9.59 & 71.23 \\
    \bottomrule
  \end{tabularx}
\end{table}

As the are only derived from requirements,  without any implementation-specific knowledge, they are applicable on any system that claims to implement the given specification. The mutation score of $71.23\%$  illustrates that our strategies, although computed for only three different faults that are assumed to only affect a single output signal, are also sensitive to many other faults.

If we only apply one of the four strategies we computed per fault model and output signal, then the resulting test suite can kill (1) $51$ mutants, (2) $51$ mutants, (3) $49$ mutants and (4) $49$ mutants. While one strategy per fault and output already achieves a high mutation score, these numbers illustrate the advantage of computing multiple strategies per fault model and output signal.

In \reftab{fdir:mutcov} we present the mutation score of the individual combinations of signals and fault models.
From all the mutants killed, there were 9 mutants only killed by a single signal / fault model combination, namely \nomon with stuck-at-0 assumption exclusively killing 7 mutants and \safemode with stuck-at-0 assumption exclusively killing 2 mutants.

\paragraph{Random testing}
We compared the fault finding abilities of the generated test strategies and random testing executed for 100, 10'000, and 100'000 time steps, respectively.  For random testing we use a similar test setup to the test strategy setup, but instead of requesting the input values $\inp_{FDIR}$ from a test strategy we use uniformly distributed random values.  For each time step, the input and output values are recorded.  For each mutant the same input sequence is supplied and the output sequence of the mutant is compared to the output sequence of the actual implementation.  

Random testing for 100 time steps killed 46 mutants (mutation score of 63\%), while random testing for 10'000 time steps killed 69 mutants (mutation score of 94.5\%).  With increased time steps the results stayed the same.  Random testing for 100'000 time steps killed 69 mutants as well.

Our strategies are able to kill three mutants that are missed by all of the three random test sequences.  These mutants can only be killed when executing certain input/output sequences and it is very unlikely for random testing to hit one of the required sequences.  The corresponding sequence requires that a sequence of \normerr, \nommode going low and \nommode going high is executed multiple times before either \criterr or \reset is triggered.

One mutant is neither covered by the test strategies nor by the random sequences.  This mutant requires a longer sequence as well in order to be executed.  The mutant is not covered by the test strategies because the sequence is about the timeout of an EP command, which is not covered by the specification from which the test strategies are derived.

\paragraph{Code coverage}
\reftab{code coverage} lists the line coverage and branch coverage measured with {\tt gcov} for the different testing approaches.  The table is built as follows: each line belongs to one testing approach.  The first column names the approach, the second column lists the number of time steps, and the third and the fourth column present the line and branch coverage.  Overall, the random testing approaches achieve a higher code coverage than the generated adaptive test strategies when executed on the source code of the FDIR component.  The test strategies are directly derived from the specification and independent from a concrete implementation.  Parts of the implementation which refine the specification or which are not specified at all are not necessarily covered.  As mentioned in Section~\ref{sec:strateval} the implementation adds timeouts for operations of the EP.  Manual analysis revealed that removing the corresponding instructions would increase the line coverage to 87.3\% and the branch coverage to 74.5\%.  In combination random tests and our strategies together achieve a line coverage of 97.6\% and a branch coverage of 87\%.

\begin{table}[t]
  \centering
  \caption{Code coverage by testing approach.  The suffix ``k'' multiplies by $10^3$.}
  \label{tab:code coverage}
  \begin{tabularx}{\textwidth}{lZZZ}
    \toprule
    \multicolumn{1}{c}{\bf Approach} &
    \multicolumn{1}{c}{\bf \#Steps} &
    \multicolumn{2}{c}{\bf Coverage Criterion} \\
    \cmidrule(lr){3-4}
    &&
    \multicolumn{1}{c}{\bf Line} &
    \multicolumn{1}{c}{\bf Branch} \\
    &&
    \multicolumn{1}{c}{[\%]} &
    \multicolumn{1}{c}{[\%]} \\
    \midrule
    Random         &     100 & 80.5 & 64.8 \\
    Random         &     10k & 96.3 & 85.2 \\
    Random         &    100k & 96.3 & 85.2 \\
    \midrule
    Test strategy  &      80 & 76.8 & 64.8 \\
    \midrule
    {\bf Together} &         & 97.6 & 87.0 \\
    \bottomrule
  \end{tabularx}
\end{table}

\section{Conclusion}
\label{sec:concl}

We presented a new approach to compute adaptive test strategies from temporal logic specifications using reactive synthesis with partial information. The computed test strategies reveal all instances of a user-defined fault class for every realization of a given specification.  Thus, they do not rely on implementation details, which is important for products that are still under development or for standards that will be implemented by multiple vendors.  Our approach is sound but incomplete in general, i.e., may fail to find test strategies even if they exist.  However, for many interesting cases, we showed that it is both sound and complete.

The worst-case complexity is doubly exponential in the specification size, but in our setting, the specifications are typically small.  This also makes our approach an interesting application for reactive synthesis.  Our experiments demonstrate that our approach can compute meaningful tests for specifications of industrial size and that the computed strategies are capable of detecting faults hidden in paths that are unlikely to be activated by random input sequences.

We applied our approach in a case study on the fault detection, isolation and recovery component of the satellite Eu:CROPIS that is currently under development.  Our computed test suite, based only on three different types of faults, increases the mutation score of random testing from 94.5\% to 98.6\%. We can also increase the branch coverage of the code from 85.2\% to 87\%. In particular, our approach detects faults that require more complex input sequences to be triggered that are not covered by random testing.

Current directions for future work include improving scalability, success-rate, and usability of our approach.  To this end, we are investigating using random testing for inputs in the strategies that are not fixed to single values, and best-effort strategies~\cite{Faella08,Faella09} for the case that there are no test strategies that can guarantee triggering the fault.  Another direction for future work is research on evaluating LTL properties specified on infinite paths on finite traces to improve the evaluation process when executing the derived strategies.

\section*{Acknowledgment}

This work was supported in part by the Austrian Science Fund (FWF) through the research network RiSE (S11406-N23) and by the European Commission through projects IMMORTAL (317753) and eDAS (608770).  We thank Ayrat Khalimov for helpful comments and assistance in using \party.

\end{document}